\documentclass[12pt]{article}
\usepackage{amsfonts,amssymb,stmaryrd,frf}
\begin{document}
\title{On orbifolds and free fermion constructions}
\author{Ron Donagi \and Katrin Wendland}
\maketitle
\begin{abstract}
This work develops the correspondence between orbifolds and free fermion models.
A complete classification is obtained for orbifolds $X/G$ with 
$X$ the product of three elliptic curves and $G$ an abelian
extension of a group $(\Z_2)^2$ of twists acting on $X$. 
Each such quotient $X/G$ is shown to give a geometric interpretation
to an appropriate free fermion model, including the geometric NAHE+ model.
However, the semi-realistic NAHE free fermion model is proved to be 
non-geometric: its Hodge numbers are not reproduced by any orbifold $X/G$.
In particular cases it is shown that $X/G$ can agree with some
Borcea-Voisin threefolds, an orbifold limit of the Schoen threefold,
and several further orbifolds thereof. This yields free fermion models with geometric interpretations on such special threefolds.
\end{abstract}
\section*{Introduction}

This work explores a class of heterotic string theories, more precisely of heterotic
conformal field theories, and their geometric interpretations. We
consider quantum field theories that arise by means of so-called free 
fermion constructions, and we study the geometric counterparts of the 
resulting models. Free fermion models are interesting in this context, because
mathematically, they are comparatively simple. They all yield rational
conformal field theories, which makes them mathematically well behaved. 
On the other hand, there are free fermion models which can be interpreted
as nonlinear sigma models on tori. In other words, there are special points
in the moduli space of conformal field theories on tori, where the 
corresponding conformal field theories allow a free fermion construction. Hence
for some particular models, there are geometric interpretations
at hand, and the notion of ``geometric interpretation'' can indeed be made
mathematically precise. Finally, more general free fermion models can be included
into the discussion by implementing orbifold techniques. 

This raises the natural question whether one can find free fermion models
which on the one hand yield semi-realistic string theories, in that they 
produce exactly the spectrum of the minimal supersymmetric standard model
in the observable massless sector, and which on the other hand allow a
geometric interpretation on a geometric orbifold of a torus. In other words, do 
any free fermion models exist
which connect both to the real world, via the standard model of particle
physics, and to geometry, via a mathematically tractable geometric 
interpretation?

Addressing the first part of the task, 
to our knowledge,  \cite{muwi86} contains the first hint that free 
fermion models could be used to construct 
semi-realistic models by orbifold-like procedures. 
These ideas have been further developed by many authors, and
models with semi-realistic gauge groups are given e.g.\
in \cite{fny90,inq87}, see also \cite{cfn99}.
Further references on free fermion models are
\cite{klt87,gool85,abkw86, abk87, ab88}, and the reader interested in the related
topic of covariant lattice approaches could consult 
\cite{fms86,cfq86,lls86,lelu87,bfhv87,lls87,lns87,luth88,ltz88}.

An example of a model of interest to us in this context is the so--called NAHE model 
\cite{fgkp87,aehn89,fana93}. 
It is an example of a semi-realistic heterotic string theory,
and it can be obtained from a toroidal model by a chain of orbifoldings
of type $\Z_2$. In fact, a closer study reveals that a geometric 
interpretation on an orbifold of a torus, if it exists, must have the
form $X/G$ with $X$ the product of three elliptic curves and 
$G$ a semidirect product of a group $G_S$ of shifts on $X$ and
a subgroup $G_T\subset G$ which is isomorphic to $(\Z_2)^2$, see
\cite{fa93,fft06}.

One is hence naturally led to a classification problem: To determine
all topologically inequivalent Calabi-Yau threefolds that arise by
resolving the quotient singularities in $X/G$ with $X$ the product of
three elliptic curves and $G$ a group of the type described above. This 
problem is solved in the present paper. A partial classification was
already given in \cite{dofa04}, under additional restrictions on
the ``group of twists'' $G_T$. A classification of Calabi-Yau threefolds
$X/G$ for which $G_T$ is isomorphic to $(\Z_n)^2$
with $n\neq2$ was given by Jimmy Dillies in \cite{di07}. From these
classifications one finds a negative answer to the question posed above:
No purely geometric interpretation 
of the semi-realistic NAHE free fermion model exists, since 
for none of the groups $G$ described above,
the Hodge numbers $h^{1,1},\, h^{2,1}$ of the resolution of $X/G$ yield 
three generations  $h^{1,1}-h^{2,1}=3$. In other words, the NAHE and other
semi-realistic free fermion models must involve some non-geometric 
orbifolds.

The goal of this paper is to solve the geometric classification problem,
to embed it into the context of free fermion models, and to point out some
interesting geometric and model-building features arising from the classification. 

We start in section 1 with the classification of quotients $X/G$ with $G$ 
as described above. We give a complete list, including the Hodge numbers 
of the resulting resolved Calabi-Yau threefolds, as well as their fundamental 
groups. We also include an (incomplete) discussion of possible coincidences 
within our list.

In section 2 we show that for each of the Calabi-Yau threefolds 
in our list there exists a free fermion model 
whose underlying geometry is $X/G$. We start with a mathematical
review of free fermion constructions. We state and explain the rules of the game,
and we discuss orbifolds in the free fermion language. We rederive the well-known
fact that a particular free fermion model allows a geometric interpretation on an
$SO(12)$ torus. This, along with the discussion of orbifolds, allows us to show
that indeed for each model in our list of orbifolds $X/G$, there is an associated
free fermion model. 

Our list includes a number of Calabi-Yau threefolds 
that are familiar from other contexts. 
The simplest of these is
the Vafa-Witten threefold
$X/(\Z_2)^2$ studied in \cite{vawi95}.
The NAHE+ model, capturing the geometric part of the NAHE model, is another example.
Contrary to popular lore, it is \emph{NOT} a $\Z_2$ orbifold of the Vafa-Witten threefold. 
We show instead that it can be obtained as a $\Z_2 \times \Z_2$ orbifold of the Vafa-Witten 
threefold. The full NAHE model is not geometric: we do not obtain any three-generation models in our classification. 
For other examples, we recover six different types of Borcea-Voisin threefolds
\cite{bo97,vo93} within our list. We also
find orbifold limits of Schoen's threefold \cite{sch88} and some of its
orbifolds within our list of quotients $X/G$. This may be of considerable 
interest because precisely these threefolds have been successfully 
used in the construction of semi-realistic heterotic string theories in
\cite{dopw00,bodo06,bcd06,bodo07}. If an appropriate degenerate limit
of the relevant gauge bundles can be found, then our result will lead
to a dramatic simplification of these heterotic constructions: Free fermion
models, after all, are mathematically well understood and technically
easy to handle.

Discrete torsion may be included in our orbifolds without leaving the realm 
of free fermion constructions. We note that turning on discrete torsion has 
a rather mild effect on the Hodge numbers of our threefolds: we get many of 
the Hodge numbers of models without torsion, and the only new Hodge pairs are 
mirrors of existing pairs. Similar observations in more specialized situations 
have been made before, e.g. in \cite{dw00,prrv07}.
According to Vafa and Witten \cite{vawi95}, full mirror 
symmetry (as opposed to just the Hodge theoretic matching) 
is indeed sometimes realized through discrete torsion. 
This situation may be specific to $(\Z_2)^2$ orbifolds though, as suggested
in \cite{ks95}. 
The conclusion of \cite{ks94}, suggesting that asymmetric orbifolds should 
be related to discrete torsion, applies in a different setting, where
the emphasis lies on simple current constructions but not on geometric 
interpretations. 
The NAHE model is not obtainable as a geometric orbifold, with or without discrete torsion.
Among the six Borcea-Voisin threefolds we obtain, three are their own mirrors, while the other three are exceptional in the sense that they do not have mirrors within the 
Borcea-Voisin construction. Our result that for these
threefolds, there exist associated free fermion models, could therefore well be 
useful to shed some light on aspects of mirror symmetry and discrete 
torsion for these threefolds.

Our basic classification is 
accomplished with the help of some simple reduction principles, which reduce 
the combinatorial complexity and allow 
us to do everything by hand. Without these reductions, the amount of calculations 
required is massive. Indeed, several computer searches have been carried out 
recently on regions in the string landscape that overlap ours to various degrees. 
Nooij \cite{cfn03,no04} studied $\Z_2$-type free fermion models 
based on the $SO(12)$ torus. He includes 
non-geometric orbifolds, and finds a handful of three generation models. A partial list of orbifolds and Hodge numbers is obtained in \cite{prrv07}. In work in progress, these authors are studying orbifolds with generalized discrete torsion. This apparently leads them to recover precisely the complete list of Hodge numbers obtained here. The coincidence is quite intriguing; it would be interesting to know whether the objects themselves coincide or whether the Hodge numbers simply fail to capture the relevant data. Note that, for example,
the fundamental groups of their models have not been computed.
In another work in progress, Kiritisis, Lennek and Schellekens 
\cite{kls08} are searching certain free fermion models whose
partition functions are left-right symmetric. Due to a language barrier, 
it is difficult to compare their models directly to ours. The list of Hodge numbers 
they get apparently agrees with ours, except that they get one additional model, 
with Hodge numbers (25,1). The latter is clearly not geometric in our sense:
By our assumptions on the orbifolding group $G$, the $G$-invariant part of
$H^\ast(X,\R)$ contains three dimensional subspaces of $H^{1,1}(X,\R)$ and
of $H^{1,2}(X,\R)$, respectively. Hence 
the Hodge numbers of all our geometric orbifolds arise by adding contributions of 
various twisted sectors to the basic (3,3) contribution of the bulk sector, so our Hodge numbers must be at least 3.\\[5pt]

{\textbf{Acknowledgements.}}

The research leading to this paper has been performed at various locations. The research of R.D. has been supported by NSF grants DMS 0139799 and DMS 0612992, and by Research and Training Grant DMS 0636606. K.W. cordially thanks CIRM at Luminy, France and the Penn Math/Physics Group for their hospitality. Her repeated visits to Philadelphia have been partly funded by Penn's NSF Focused Research Grant, DMS 0139799, and by her Nuffield Award to Newly Appointed Lecturers in Science, Engineering and Mathematics, NAL/00755/G. We have 
benefitted from discussions with A. Bak, V. Bouchard, J. Dillies, A. Faraggi, E. Kiritsis, 
M. Kreuzer, M. Ratz, and B. Schellekens.

\section{A classification of relevant orbifolds}\label{classification}
In this section, we discuss a classification of orbifoldings and orbifolds.
Restricting to groups whose so-called twist group $G_T$ is
isomorphic to $(\Z_2)^2$, we introduce a notion of equivalence among such
groups, via a reduction principle. Orbifolding the product
of three elliptic curves by one group yields a quotient which is isomorphic to what is obtained from the product of three different (but isogenous) elliptic curves by an equivalent group.
We give a classification of all such groups up to equivalence.
We also calculate some topological data of the resulting orbifolds, namely their 
Hodge numbers and their fundamental groups. This gives further information
about possible isomorphies among the respective quotients.
The main results are the tabulation of orbifolds in Section \ref{tables} and the somewhat incomplete analysis of coincidences in Section \ref{coincidences}.
\subsection{On a classification of toroidal orbifolds}\label{classtor}
We work with a 6 (real) dimensional torus $X \cong T^6$ with the complex
structure of a product $E_1 \times E_2 \times E_3$ of three elliptic curves. Let
$T_0 \cong (\Z_2)^2 \subset (\Z_2)^3$
be the Klein group of twists acting on
$(z_1,z_2,z_3) \in X$
by an even number of sign changes:
\begin{eqnarray*}
t_1:&& (z_1,z_2,z_3) \to (z_1,-z_2,-z_3), \\
t_2:&& (z_1,z_2,z_3) \to (-z_1,z_2,-z_3), \\
t_3:&& (z_1,z_2,z_3) \to (-z_1,-z_2,z_3). 
\end{eqnarray*}
An arbitrary automorphism $g$ of X can be factored uniquely:
$g = s \circ g_t$,
where the twist part $g_t$ is an automorphism sending the origin $0 \in X$
to itself, while the shift part $s$ is translation by 
$g(0) \in X$. Any group $G$ of automorphisms fits in an exact sequence
$$ 
0 \longrightarrow G_S \longrightarrow G \stackrel{\pi}{\longrightarrow} G_T^0 \longrightarrow 0,
$$
where $G_S$ is the subgroup of shifts contained in $G$, and $G_T^0$ is the
group of twist parts
of all elements of $G$, so
$G_T^0 := \left\{g_t | \exists\right.$ a shift $s$ such that 
$\left.g = s \circ g_t \in G\right\}$. 
In general, $G_T^0$ is not a subgroup of $G$. However, it  follows
from Lemma \ref{essential} below that we can always reduce to a situation where we can
choose a subgroup $G_T\subset G$ 
which maps isomorphically onto $G_T^0$ under $\pi$, and such that $G=G_S\times G_T$.

Our goal in this section is to study toroidal orbifolds, i.e.\ quotients
$X/G$, for all finite groups $G$ whose twist part is $T_0$. We will see
that these come in a finite number of irreducible families.
\begin{definition}
We say that a group $G$ of automorphisms of $X$ is {\em
redundant} if it contains a translation by a non zero $x \in E_i$ for some
$i \in {1,2,3}$, and is {\em essential} otherwise.
\end{definition}
Our first observation (cf.\cite{dofa04}) is that there is a simple reduction
principle: every toroidal orbifold $X/G$ with $X = E_1 \times E_2 \times E_3$ and a
given twist part $G_T^0$ is also of the form $X^\prime/G^\prime$ for some 
$X^\prime = E_1^\prime \times E_2^\prime \times E_3^\prime$ and some essential group of automorphisms 
$G^\prime$ with the same
twist part $G_T^0$. Indeed, if the redundant $G$ contains a translation
$s_x$ by a non zero element $x \in E_i$, then $x$ must be a torsion
element, the quotient $E^\prime_i := E_i / x$ is an elliptic curve, the quotient
$X^\prime  := X/x$ is still a product of three elliptic curves with one $E_i$
replaced by $E^\prime_i$, and
$$
X/G \cong X^\prime/G^\prime , 
$$
where $G^\prime := G / \langle s_x\rangle$ fits into an exact sequence
$$ 
0 \to G^\prime_S \to G^\prime \to G^\prime_T \to 0,
$$
with $G^\prime_S = G_S / \langle s_x\rangle$ and $G^\prime_T = G_T^0$ as claimed.

We therefore may as well restrict attention to essential groups $G$.
\begin{lemma}\label{essential} 
Any essential group $G$ with twist part $G^0_T = T_0$ is
commutative and isomorphic to the direct product $G_S \times G^0_T$ of
its shift and twist parts. All elements of $G$ are of order $2$, and 
up to conjugation $G$ is
contained in $G^{max}$ which is the extension
$$ 
0 \to X[2] \to G^{max} \to T_0 \to 0, 
$$
where $X[2] \cong (\Z_2)^6$ is the group of all points of order $2$ in $X$.
\end{lemma}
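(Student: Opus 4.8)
The plan is to prove each assertion in turn, working with the semidirect-product structure forced by the twist group $T_0$.

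The plan is to prove the four assertions in the order: every element has order $2$, commutativity, the direct-product splitting, and finally the containment in $G^{max}$ up to conjugation. Throughout I write a general automorphism as $g=s_v\circ t$, where $t\in T_0$ is its twist part and $s_v$ is translation by $v=(v_1,v_2,v_3)\in X$; composition then reads $(s_v\circ t)(s_w\circ u)=s_{v+t(w)}\circ(tu)$. Since $\pi\colon G\to T_0$ is surjective, I fix lifts $g_1=s_{w_1}\circ t_1$ and $g_2=s_{w_2}\circ t_2$ in $G$.

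First I would prove that every element has order $2$, which is where essentiality does all the work. For a twisted element $g=s_v\circ t_i$ one computes $g^2=s_{v+t_i(v)}$; because $t_i$ acts as $+1$ on $E_i$ and as $-1$ on the other two factors, the vector $v+t_i(v)$ is supported on the single factor $E_i$, so $g^2$ is a translation along $E_i$. Essentiality forbids any nonzero such translation in $G$, hence $g^2=1$. For a pure shift $s_v\in G_S$, conjugating by a lift gives $g_i\,s_v\,g_i^{-1}=s_{t_i(v)}\in G_S$ (as $G_S=\ker\pi$ is normal); therefore $s_{v+t_i(v)}=s_v\,s_{t_i(v)}\in G_S$ is again a translation along $E_i$, and essentiality forces its $E_i$-component $2v_i$ to vanish. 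Running over $i=1,2,3$ yields $2v=0$, so $s_v^2=1$ and in fact $G_S\subseteq X[2]$. Commutativity is then immediate from the standard fact that a group all of whose elements square to the identity is abelian: $(gh)^2=1$ gives $gh=h^{-1}g^{-1}=hg$. With these two facts $G$ is an elementary abelian $2$-group, so the sequence $0\to G_S\to G\to T_0\to 0$ splits; concretely, the commuting involutions $g_1,g_2$ generate $G_T=\langle g_1,g_2\rangle$, which has order at most $4$ yet surjects onto $T_0$ of order $4$, so $\pi|_{G_T}$ is an isomorphism, $G_T\cap G_S=1$, and $G=G_S\times G_T$ with $G_T\cong G_T^0$.

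It remains to place $G$ inside $G^{max}=\{s_v\circ t : v\in X[2],\ t\in T_0\}$, which is readily checked to be a group fitting the stated extension (each $t\in T_0$ preserves $X[2]$). Since $G_S\subseteq X[2]\subset G^{max}$ already, the task reduces to moving the shift parts $w_1,w_2$ of the twist lifts into $X[2]$ by a single conjugation. Conjugating $G$ by a translation $s_p$ leaves $G_S$ fixed and replaces the shift part of $g_i$ by $w_i+(1-t_i)p$; writing out $(1-t_i)p$ shows that I may adjust $w_{1,2},w_{1,3}$ by $2p_2,2p_3$ and $w_{2,1},w_{2,3}$ by $2p_1,2p_3$, while the fixed-factor components $w_{1,1},w_{2,2}$ are already $2$-torsion since $g_1^2=g_2^2=1$.

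The main obstacle is that the single parameter $p_3$ must simultaneously normalize the $E_3$-components of both $g_1$ and $g_2$, so I need $w_{1,3}$ and $w_{2,3}$ to differ by a $2$-torsion point. This compatibility is exactly supplied by the fact that $g_3:=g_1g_2$ is again an involution: expanding $(g_1g_2)^2=1$ and reading off the $E_3$-component gives $2(w_{1,3}-w_{2,3})=0$, i.e.\ $w_{1,3}-w_{2,3}\in E_3[2]$. Granting this, surjectivity of multiplication by $2$ on each $E_i$ lets me choose $p$ with $2p_3=-w_{1,3}$, $2p_2=-w_{1,2}$, $2p_1=-w_{2,1}$; the conjugated lifts then have shift parts $(w_{1,1},0,0)$ and $(0,w_{2,2},w_{2,3}-w_{1,3})$, both in $X[2]$. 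Since $G=G_S\times\langle g_1,g_2\rangle$, it follows that $s_p\,G\,s_p^{-1}\subseteq G^{max}$. I expect this simultaneous-normalization step, and in particular the derivation of the $g_3$-compatibility that makes it solvable, to be the crux of the argument.
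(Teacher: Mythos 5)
Your proof is correct and follows essentially the same route as the paper's: essentiality forces all elements to square to the identity (twisted elements square to shifts along a single $E_i$; pure shifts are $T_0$-invariant so each component is $2$-torsion), whence commutativity, the splitting $G=G_S\times G_T$, and finally conjugation by a translation to push the twist generators into $G^{max}$. The only difference is that you spell out the "one checks immediately" conjugation step in full, including the compatibility $2(w_{1,3}-w_{2,3})=0$ coming from $(g_1g_2)^2=1$ — which is exactly the paper's observation that $y_3-x_3$ must be a point of order $2$.
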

\begin{proof} 
First we show that any $g \in G$ has order 2. Let
$g=s \circ g_t \in G$ with $s\in G_S$ a shift by $x\in X$
and $g_t\in G_T^0=T_0$. If $g_t \neq 1 \in G^0_T$ then $g^2$ is a shift
along $x+g_t(x)$, i.e.\ along
one of the three elliptic curves, so essence implies $g^2=1$.  We
still need to consider $g \in G_S$. The subgroup $G_S$ of $G$ is isomorphic to
$$
\overline{G}_S := \left\{x \in X | s_x \in G_S\right\}.
$$
The latter is invariant under the action of $T_0$. If it contains
$x=(x_1,x_2,x_3)$ it must also contain $t_i(x)$ hence $x+t_i(x)$, which is
in $E_i$. Essence therefore implies that $2x_i = 0$ for all $i$.

It follows that $G$ is commutative and contains a subgroup $G_T$ that maps isomorphically onto the twist group $G^0_T$. Further, it follows that $G$ is isomorphic to the direct product $G_S \times G_T$ of its shift subgroup $G_S$ with any such $G_T$. Now $G_S$ is a group of translations by points of order 2, so it is contained in $G^{max}$. The twist group $G_T$ need not be contained in $G^{max}$. Its generators can be written in the form:
\begin{eqnarray*}
(z_1,z_2,z_3) \to (x_1+z_1,x_2-z_2,x_3-z_3), \\
(z_1,z_2,z_3) \to (y_1-z_1,y_2+z_2,y_3-z_3)
\end{eqnarray*}
The order-2 condition requires that $x_1, y_2$ and $y_3-x_3$ be points of order 2, while the three remaining variables are unconstrained in the three elliptic curves $E_i$. Nevertheless, one checks immediately that conjugation by an appropriate translation of $X$ (which also has three complex degrees of freedom, one in each $E_i$) can be chosen to set $x_2=x_3=y_1=0$. Such a conjugation takes $G_T$ into $G^{max}$ and leaves $G_S$ unchanged, completing the proof.

\vspace{3pt}
\end{proof}

In view of the lemma, our essential group $G$ contains a ``subgroup of twists'' $G_T$ which 
under $\pi$ maps
isomorphically to $G^0_T$, and $G$ is isomorphic to $G_S \times G_T$. In
the next section we will see that up to conjugation there are four
possible actions of the twist group $G_T$ on $X$.
\subsection{Classification of essential automorphism groups}
\begin{definition} 
The {\em rank} of an essential automorphism group $G$ is the
rank of $G_S$ as a module over $\Z_2$.
\end{definition}
We will study the possible automorphism groups according to their
increasing rank. We will usually describe an automorphism group in terms
of a minimal set of generators, listing each generator in the form of a triple
$(\epsilon_1 \delta_1, \epsilon_2 \delta_2, \epsilon_3 \delta_3)$, where
$\epsilon_i \in E_i$ is a point of order 2, and $\delta_i \in \{ \pm \} $
indicates the pure twist part. We take the period lattice of the elliptic
curve $E_i$ to be generated by $2$ and $2\tau$, so the $\epsilon_i$ can
be one of $0,1,\tau, 1+\tau$. The three operations that produce equivalent
groups are change of basis, permutation of the three coordinates $z_i$ of
the torus, and a shift of one or more of the $z_i$. 
We start with rank $0$, where instead of listing two generators
we often list all three non zero group elements.
\begin{lemma}\label{rankzero}
There are 4 inequivalent groups $G=G_T$ of rank $0$, given as
follows:
\begin{eqnarray*}
(0-1):&& (0+,0-,0-),(0-,0+,0-),(0-,0-,0+),\\
(0-2):&& (0+,0-,0-),(0-,0+,1-),(0-,0-,1+),\\
(0-3):&& (0+,0-,0-),(0-,1+,1-),(0-,1-,1+),\\
(0-4):&& (1+,0-,0-),(0-,1+,1-),(1-,1-,1+).
\end{eqnarray*}
\end{lemma}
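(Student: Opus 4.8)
The plan is to attach to each rank-$0$ group a small set of honest numerical invariants, and then to exhibit the four classes as the orbits of these invariants under the equivalences (change of basis, coordinate permutation, shift).

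First I would set up and normalize. Since the rank is $0$ we have $G_S=0$, so $G=G_T$ is isomorphic to $T_0\cong(\Z_2)^2$ and its three nontrivial elements $g_1,g_2,g_3$ carry the twist parts $t_1,t_2,t_3$; a change of basis in $(\Z_2)^2$ lets me take $g_1,g_2$ as generators with twist parts $t_1,t_2$, so that $g_3=g_1g_2$ has twist part $t_3$. By Lemma \ref{essential} I may conjugate $G$ by a translation into the normal form $g_1=(a_1+,0-,0-)$ and $g_2=(0-,a_2+,a_3-)$ with $a_1\in E_1[2]$, $a_2\in E_2[2]$, $a_3\in E_3[2]$ points of order $2$. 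A direct computation then gives $g_1g_2=(a_1-,a_2-,a_3+)$ and confirms that $\{1,g_1,g_2,g_3\}$ is a copy of $(\Z_2)^2$ with all elements of order $2$, so every such triple does define an essential rank-$0$ group.

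Next I would isolate the invariants. For each $i$ the twist $t_i$ acts trivially on the $i$-th factor, so the $E_i$-component of $g_i$ is a genuine shift $a_i\in E_i[2]$, forced to be of order $2$ by the order-$2$ condition. The key point is that conjugating $g_i$ by a translation $s_v$ alters its $E_i$-component by $v_i-t_i(v)_i=0$, so each $a_i$ is invariant under the shift equivalence; on the other hand these are exactly the conjugations used to normalize away all the remaining, flipped-direction shift components. Hence, up to the shift equivalence, $G$ is completely determined by the triple $(a_1,a_2,a_3)\in E_1[2]\times E_2[2]\times E_3[2]$, giving $4^3=64$ labelled possibilities before the other two equivalences are applied. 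I would then bring in those: a modular change of basis of the period lattice of $E_i$ acts on the nonzero $2$-torsion $\{1,\tau,1+\tau\}$ of $E_i[2]$ through the reduction $\bmod\,2$, which realizes all of $S_3$ while fixing $0$; thus up to change of basis each $a_i$ collapses to the single datum ``$a_i=0$'' versus ``$a_i\neq 0$'', reducing $64$ to the $2^3=8$ subsets $S:=\{i:a_i\neq 0\}\subseteq\{1,2,3\}$, with nonzero representative taken to be $1$. Finally a permutation of the coordinates $z_i$ simultaneously permutes the $t_i$ and the $E_i$, hence acts on the flags by the natural $S_3$-action on $S$, so the surviving invariant is $|S|\in\{0,1,2,3\}$.

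To conclude I would verify both inequalities. Every one of the three operations preserves $|S|$ (shifts fix each $a_i$, change of basis preserves $a_i=0$ versus $a_i\neq 0$, and coordinate permutations preserve cardinality), so the four values yield at least four inequivalent groups; conversely, given a fixed $|S|$ one uses a coordinate permutation to standardize $S$ and a change of basis to send each nonzero $a_i$ to $1$, reaching a common normal form, so there are at most four. The four listed representatives are then identified by reading off the $E_i$-components of their twist-$t_i$ elements: $(0\text{-}1)$ has $|S|=0$, $(0\text{-}2)$ has $S=\{3\}$, $(0\text{-}3)$ has $S=\{2,3\}$, and $(0\text{-}4)$ has $S=\{1,2,3\}$. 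The step I expect to be the real obstacle is the middle one: establishing, cleanly and simultaneously, that the three fixed-direction shifts $a_i$ are both genuinely conjugation-invariant and a \emph{complete} conjugation invariant, i.e.\ that normalization leaves nothing else behind. Once that is pinned down, the reduction $64\to 8\to 4$ through change of basis and coordinate permutation is routine.
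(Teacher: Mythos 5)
Your argument is correct, and while it rests on the same two normalizations as the paper's proof, it finishes in a genuinely different and arguably better way. Both proofs first use a conjugating translation to kill three of the four flipped-direction shift components of the two generators (this is exactly the $x_2=x_3=y_1=0$ normalization already established in Lemma \ref{essential}), and then a relabeling of the homology basis of each $E_i$, acting through $GL_2(\Z_2)\cong S_3$ on the nonzero $2$-torsion, to reduce each surviving component to $0$ or $1$; this yields the same $2^3=8$ candidates in both treatments. The paper then proceeds by explicit case-checking: it lists the other four candidates and exhibits, one at a time, a shift and a coordinate permutation carrying each to (0-2) or (0-3), and for the inequivalence of the four survivors it merely says that ``similar elementary means'' suffice, deferring the actual proof to the topological data of Table 1. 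You instead organize everything around the triple $(a_1,a_2,a_3)$ of fixed-direction shifts of the three canonical elements $g_i$ (the element of $G$ with twist part $t_i$), correctly observing that each $a_i$ is conjugation-invariant because $v-t_i(v)$ has vanishing $i$-th coordinate, and you then reduce to the cardinality $|S|$ of $S=\{i\colon a_i\neq 0\}$. This buys two things the paper leaves implicit: the identification of equivalent candidates among the $8$ collapses to the single remark that coordinate permutations act as $S_3$ on $S$, and the inequivalence of the four classes is proved outright rather than asserted. The only loose point in your write-up is the phrase ``normalize away all the remaining, flipped-direction shift components'': one cannot kill all four flipped components of the two generators, only three of them; the fourth survives as the $E_3$-component of $g_2$, and what makes this harmless is precisely your identification of that residue with the fixed-direction shift of $g_3=g_1g_2$, which is the honest invariant $a_3$. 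Read that way, your completeness claim for the invariant is exactly right, and the reduction $64\to 8\to 4$ goes through as you describe.
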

\textsl{Remark}: 
In \cite{dofa04}, only the first of these possibilities, as well as its
further quotients, were considered, leading to the considerably shorter
list there.
\\[3pt]
\begin{proof} 
Any rank 0 group is generated by two elements of the form
$(\epsilon_1 +, \epsilon_2 -, \epsilon_3 -)$ and $(\epsilon_4 -,
\epsilon_5 +, \epsilon_6 -)$. By shifting the three coordinates $z_i$ we
can clearly arrange that $\epsilon_2=\epsilon_3=\epsilon_4=0$, and by
changing the labeling of a homology basis for the $E_i$ we can take each
of the remaining $\epsilon_i$ to be $0$ or $1$. This leaves us with 8
possibilities, including the four above and
\begin{eqnarray*}
&&(0+,0-,0-),(0-,1+,0-),(0-,1-,0+);\\
&&(1+,0-,0-),(0-,0+,0-),(1-,0-,0+);\\
&&(1+,0-,0-),(0-,0+,1-),(1-,0-,1+);\\
&&(1+,0-,0-),(0-,1+,0-),(1-,1-,0+).
\end{eqnarray*}
Of these, the first two are equivalent to $(0-2)$ under a permutation of the
three coordinates. The third is transformed by a shift of $z_3$ to
$(1+,0-,1-)$, $(0-,0+,0-)$, $(1-,0-,1+)$ which is equivalent to $(0-3)$ under a
permutation of $z_1,z_2$. Similarly, the fourth group is transformed by a
shift of $z_1$ to $(1+,0-,0-)$, $(1-,1+,0-)$, $(0-,1-,0+)$ which under a
permutation of $z_2,z_3$ is equivalent to the third group, hence to $(0-3)$.

One can use similar elementary means to check that the four groups in the
statement of the lemma are inequivalent. In Section \ref{tables} we will find
the stronger result that the corresponding quotients $X/G$ are
topologically inequivalent.
\vspace{3pt}
\end{proof}
For a group $G$ of higher rank, we list first two generators of $G$ which
map onto a minimal generating set for the twist group $G_T$ in the previous $(\epsilon_1
\delta_1, \epsilon_2 \delta_2, \epsilon_3 \delta_3)$ notation; the
remaining generators are chosen to be in the shift subgroup $G_S
\subset G$. Since in this case all the $\delta_i$ are $0$, we can omit
them, using instead the abbreviated notation $(\epsilon_1, \epsilon_2,
\epsilon_3)$.
\begin{prop} 
There are $11$ equivalence classes of essential groups in rank
$1$, $14$ in rank $2$, $6$ in rank $3$, one in rank $4$, and none in higher ranks.
They are listed in the first two columns of Table $1$, see Section \mb{\rm\ref{tables}}.
\end{prop}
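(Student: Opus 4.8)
The plan is to enumerate, for each fixed rank $r\geq 1$, the essential groups up to equivalence by building on the rank-$0$ classification of Lemma \ref{rankzero}. By Lemma \ref{essential} I may write $G=G_S\times G_T$ with $G_T\cong T_0$ and $G_S\subset X[2]\cong(\Z_2)^6$, the rank being $\dim_{\Z_2}G_S$. Since the only translations in $G$ are those in $G_S$ (the nonidentity elements of $G_T$ are all genuine twists), essentiality is equivalent to the single condition that $G_S$ meet each coordinate plane $V_i:=E_i[2]\subset X[2]$ only in $0$. I would first record the resulting rank bound: each $V_i$ is a $2$-dimensional $\Z_2$-subspace of the $6$-dimensional space $X[2]$, so $G_S\cap V_i=0$ forces $\dim G_S+2\leq 6$, that is $r\leq 4$, which accounts for the absence of essential groups in rank $\geq 5$. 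At the top rank $r=4$ the same count forces $G_S\oplus V_i=X[2]$ for every $i$, a rigid transversality condition that I expect to pin $G_S$ down uniquely up to the symmetries, producing the single rank-$4$ class.

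For the intermediate ranks the enumeration is the substance of the proposition. The three equivalence moves---change of homology basis on each $E_i$, which realizes the full symmetric group on the nonzero $2$-torsion points $\{1,\tau,1+\tau\}$ of $E_i$; permutation of the three factors; and conjugation by a translation---act on the data $(G_T,G_S)$. The crucial simplification is that the shift parts $\epsilon_i$ of the two twist generators are only well defined modulo $G_S$, because a twist generator may be multiplied by any element of $G_S$ without changing $G$. Combined with the translation freedom (which alters the ``$-$''-coordinates of each twist generator but fixes $G_S$ pointwise), this lets me reduce the twist generators toward the normal forms of Lemma \ref{rankzero} while simultaneously normalizing $G_S$ by change of basis within it. Concretely, for each $r$ I would first list the essential rank-$r$ subgroups $G_S$ up to the basis/permutation symmetry, and then, for each such $G_S$, determine which of the four twist types of Lemma \ref{rankzero} are compatible and mutually inequivalent after the residual reductions; the modulo-$G_S$ ambiguity is precisely what makes several twist types collapse once $G_S\neq 0$.

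The main obstacle is the bookkeeping: the naive number of pairs $(G_T,G_S)$ is very large, and the reduction principles must be applied systematically both to collapse equivalent configurations and to avoid overlooking coincidences between a configuration built from one twist type and one built from another. I would keep this under control by exploiting that the translations together with the modulo-$G_S$ ambiguity free up the ``$-$''-coordinates of the twist generators, so that only a few residual invariants need to be tracked---for instance the images of $G_S$ in the quotients $X[2]/V_i$ and the position of the twist shift parts relative to $G_S$. This is exactly the device that ``allows us to do everything by hand'' in the phrasing of the introduction.

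Finally, to certify that the classes produced are pairwise inequivalent---rather than merely that my reductions failed to merge them---I would not rely on the elementary equivalence moves alone but defer to the computation of Hodge numbers and fundamental groups carried out in Section \ref{tables}. Since those invariants distinguish the quotients $X/G$ topologically, they confirm both the inequivalences and the final counts $11,14,6,1$ tabulated in Table $1$, while the discussion of Section \ref{coincidences} addresses the converse question of which of the remaining quotients actually coincide.
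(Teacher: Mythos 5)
The paper gives no proof of this proposition beyond declaring it ``elementary and somewhat tedious, using the tools introduced in the proof of Lemma \ref{rankzero}'', so your outline is being compared against an intended rather than a written argument. On the completeness side your plan matches that intention: you use the same three equivalence moves, the same normal-form strategy for the twist generators, and you correctly identify the two extra levers available in positive rank (the shift parts of the twist generators being defined only modulo $G_S$, and the freedom to change basis inside $G_S$). Your observation that essentiality is equivalent to $G_S\cap E_i[2]=0$, whence $r+2\le 6$ and so $r\le 4$, is a clean way to dispose of the ``none in higher ranks'' clause and is implicit but not stated in the paper; the transversality characterization of the rank-$4$ case is likewise in the right spirit, though ``I expect to pin $G_S$ down uniquely'' is an expectation, not an argument, and would need the same hands-on check as the other ranks.

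There is, however, a genuine gap in your certification of pairwise inequivalence. You propose to defer to the Hodge numbers and fundamental groups of Section \ref{tables}, asserting that ``those invariants distinguish the quotients $X/G$ topologically.'' They do not: the paper's own Section \ref{coincidences} tabulates eight families of undistinguished cases, e.g.\ $(1-2)$ and $(1-8)$ share $(h^{1,1},h^{2,1})=(15,15)$ and $\pi_1=0$, and $(0-3)$ and $(1-10)$ are even shown to be of the same topological type. So for a substantial fraction of the $11+14+6+1$ classes, the topological invariants of the quotient cannot separate the group classes, and your method would leave open whether those entries of Table 1 are genuinely inequivalent as groups or are artifacts of an incomplete reduction. (Note also that the proposition counts equivalence classes of \emph{groups} under the three moves; even a complete set of topological invariants of the quotients would only bound this count from below.) To close this you must argue inequivalence directly at the level of the data $(G_S,G_T)$, for instance via invariants of $G_S$ and of the twist shift parts modulo $G_S$ under change of basis and permutation --- the ``similar elementary means'' the paper invokes for Lemma \ref{rankzero} --- rather than via the geometry of $X/G$.
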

The proof is elementary and somewhat tedious, using the tools introduced
in the proof of Lemma \ref{rankzero}. We leave the details to the reader.
\subsection{Orbifold cohomology}\label{orbcoh}
Though the techniques are well known, 
let us briefly summarize for the reader's convenience the procedure by which one calculates the Hodge numbers of a minimal resolution of the orbifold $X/G$. 
We will assume the situation which is of interest below, that is
$X=E_1\times E_2\times E_3$
equipped with the complex structure of the product of three elliptic curves $E_i$, and
that $G$ is an essential group of the type described in Section \ref{classtor}.
In particular, by Lemma \ref{essential},  $G=G_S\times G_T$ with $G_T\cong T_0$ under the
projection $\pi$ to the twist parts.

First observe that the cohomology of $X$ is obtained by taking the wedge product between
the total cohomologies of each elliptic curve $E_i$.
With respect to a local complex coordinate $z_i$ on $E_i$, the cohomology of the latter
is generated by $1,\, dz_i,\, d\overline z_i, \, dz_i\wedge d\overline z_i$. 
If $g\in G$ splits as $g=s\circ t_i$ with $t_i\in T_0$ into its shift and its twist part, then
$g$ acts on $dz_1,\, dz_2,\,dz_3$ by $dz_i\mapsto dz_i$ and $dz_j\mapsto -dz_j$ for $j\neq i$,
and similarly for the $d\overline z_k$. 
Hence for the $G$-invariant part of the cohomology of $X$ we find dimensions 
$h^{p,q}_{inv}$, $p,q\in\{0,\ldots,3\}$, with
$$
\left( h^{p,q}_{inv} \right)_{p,q} =
\left( \begin{array}{cccc}
1&0&0&1\\
0&3&3&0\\
0&3&3&0\\
1&0&0&1
\end{array}\right).
$$
For example, we have representatives $dz_i\wedge d\overline z_i$ in $H^{1,1}(X)$ and
$dz_1\wedge dz_2\wedge d\overline z_3$ in $H^{2,1}(X)$.

Additional contributions to the cohomology of the minimal resolution of $X/G$ come
from the blow-ups of curves of singularities. Assume that $g\in G,\, g\neq 1,$  
has fixed points on $X$. Since by assumption $g=s\circ t_i$ for
some  $t_i\in T_0$ and $s$ a shift, this implies that $s$ is a shift by some
point $x=(x_1,x_2,x_3)\in X$ of order $2$ with $x_i=0$. The fixed locus of $g$
thus consists of $16$ copies of $E_i$. In $X/G$, the image yields a curve of
singularities of type $A_1$. Its contributions to the cohomology of a resolution of
$X/\langle g\rangle$ have dimensions 
$h^{p,q}_{g}$, $p,q\in\{0,\ldots,3\}$, with
$$
\left( h^{p,q}_{g} \right)_{p,q} =
\left( \begin{array}{cccc}
0&0&0&0\\
0&16&16&0\\
0&16&16&0\\
0&0&0&0
\end{array}\right).
$$
The contributions to the cohomology of the resolved quotient $X/G$ are given
by the $G$-invariant part of these vector spaces. If $G$ has rank $r$, i.e.\ 
$G\cong (\Z_2)^{r+2}$, then the total contribution from blowing up
the fixed locus of $g=s\circ t_i$ is 
$$
\left( h^{p,q}_{g, inv, A} \right)_{p,q} =
\left( \begin{array}{cccc}
0&0&0&0\\
0&2^{3-r}&2^{3-r}&0\\
0&2^{3-r}&2^{3-r}&0\\
0&0&0&0
\end{array}\right)
\;\mb{ or }\;
\left( h^{p,q}_{g, inv, B} \right)_{p,q} =
\left( \begin{array}{cccc}
0&0&0&0\\
0&2^{4-r}&0&0\\
0&0&2^{4-r}&0\\
0&0&0&0
\end{array}\right).
$$
The case $h^{p,q}_{g, inv, B}$ applies if and only if the subgroup of $G$
which maps an irreducible component of the fixed locus of $g$ in $X$
onto itself is strictly larger than $\langle g\rangle$. Indeed, then $G$ contains  elements $h$
which map each copy of $E_i$
in the fixed locus of $g$ onto itself, but which act by multiplication by $-1$ on $dz_i$
and $d\overline z_i$, thus leaving none of the cohomology classes counted by $h^{2,1}_g$
and $h^{1,2}_g$ invariant, whereas all contributions to $h^{1,1}_g$ and
$h^{2,2}_g$ are invariant.

\subsection{Discrete torsion}\label{dt}

In his seminal paper \cite{va86}, Cumrun Vafa pointed out that in conformal field
theory, there is an additional degree of freedom $\eps\in H^2(G,U(1))$
when orbifolding by a group $G$, which is now commonly known as ``discrete torsion''.
Roughly speaking, one introduces a twisted action of $G$ on the contribution 
to the cohomology which comes from the blow-up of the singular locus in $X/G$. 
In the examples that are of interest for us, $G\cong (\Z_2)^{r+2}$. One 
checks that discrete torsion is compatible with the reduction principle of Section \ref{classtor},
and $H^2(G,U(1))=(\Z_2)^m$ with $m={r+2\choose 2}$. 
Consider elements $g,\,h\in G-\{1\}$  such that $h\neq g$ and $h$ maps each component
of the fixed locus of $g$ onto itself. Then the effect
of non-trivial discrete torsion $\eps(g,h)$
amounts to replacing the contributions $h^{p,q}_{g, inv, B}$ listed 
above by
$$
\left( h^{p,q}_{g, inv, \widetilde B} \right)_{p,q} =
\left( \begin{array}{cccc}
0&0&0&0\\
0&0&2^{4-r}&0\\
0&2^{4-r}&0&0\\
0&0&0&0
\end{array}\right).
$$
\subsection{Fundamental groups}
There is a simple procedure for calculating the fundamental group of an
orbifold, which goes back to \cite{dhvw85} in the physics literature. A
mathematical version can be found in \cite{brhi02}.

Let a group $\widetilde{G}$ act discretely on a simply connected $\widetilde X$. Let $F$
be the subgroup of $\widetilde{G}$ generated by all elements which have a
fixed point in $\widetilde X$. Then the fundamental group of the quotient space
$\widetilde X/\widetilde{G}$ is $\widetilde{G}/F$.

In our applications, we are interested in the fundamental group of 
quotients of the product $X$ of three elliptic curves, that is $X=\C^3/\Lambda$. We
take $\widetilde{G}$ to be the extension of the
orbifolding group $G$ by the lattice $\Lambda$:
$$  
0 \to \Lambda \to \widetilde{G} \to G \to 0,   
$$
so the orbifold is $X/G =
{\C}^3 /\widetilde{G}$. The calculation of the fundamental group of each
of our orbifolds is then a straightforward exercise.
\subsection{Tabulation of results}\label{tables}
\subsubsection*{Table 1: The list of automorphism groups}
We list the automorphism groups by rank. For each group $G$ we
list its twist group $G_T$, its shift part $G_S$ (if non-empty), the Hodge numbers
$h^{1,1},h^{2,1}$ of a small resolution of $X/G$, the fundamental group
$\pi_1(X/G)$, and the list of contributing sectors and their contribution.
For the fundamental groups we use the abbreviations:
\begin{eqnarray*}
A:&&        \mb{the extension of } \Z_2 \mb{ by } \Z^2 \mb{ (so } H_1(X) = (\Z_2)^3\mb{)}\\
B:&&        \mb{any extension of } (\Z_2)^2 \mb{ by } \Z^6 \mb{ (with various possible } H_1(X)\mb{)}\\
C:&&        \Z_2\\
D:&&        (\Z_2)^2
\end{eqnarray*}
A shift element is denoted by a triple $(\epsilon_1, \epsilon_2,
\epsilon_3)$, where $\epsilon_i \in E_i$ is a point of order 2,
abbreviated as one of $0,1,\tau, \tau1:=1+\tau$. A twist element is denoted by a
triple $(\epsilon_1 \delta_1, \epsilon_2 \delta_2, \epsilon_3 \delta_3)$,
where $\epsilon_i \in E_i$ is as above and $\delta_i \in \{ \pm \} $
indicates the pure twist part. A two-entry contribution $(a,b)$ adds $a$
units to $h^{1,1}$ and $b$ units to $h^{2,1}$. When $b=0$ we abbreviate
$(a,b)$ to the single entry contribution $a$. 
\\[1em]
$\begin{array}{|l|llc|c|c|}
\hline
&G_T&      G_S\hspace*{\fill}&      &         (h^{1,1},h^{2,1})&\pi_1\\ &
 &   \hspace*{\fill}\mb{sectors}  &       \mb{contribution}&         &\\ 
\hline\hline
\mb{{\textbf{Rank 0:}}}&&&&&\\\hline
(0-1)&(0+,0-,0-),(0-,0+,0-)&&&                                (51,3) &0\\ 
&&                0+,0-,0-    &    16&&\\ &
&                0-,0+,0-   &     16&&\\ &
&                0-,0-,0+   &     16&&\\
\hline
(0-2)&(0+,0-,0-),(0-,0+,1-)        &&&                        (19,19) &0\\ &
&                0+,0-,0-     &   8,8&&\\ &
&                0-,0+,1-      &  8,8&&\\
\hline
(0-3)&(0+,0-,0-),(0-,1+,1-)       &&&                         (11,11)&A\\ &
&                0+,0-,0-   &     8,8&&\\ 
\hline
(0-4)&(1+,0-,0-),(0-,1+,1-)       &&&                         (3,3)&B\\ 
\hline
\hline
\mb{{\textbf{Rank 1:}}}&&&&&\\\hline
(1-1)&(0+,0-,0-),(0-,0+,0-)   &     (\tau,\tau,\tau)    &&                    (27,3)&C\\ &
&                0+,0-,0-     &   8&&\\ &
&                0-,0+,0-     &   8&&\\ &
&                0-,0-,0+     &   8&&\\
\hline
(1-2)&(0+,0-,0-),(0-,0+,\tau-)  & (\tau,\tau,\tau)             &&    (15,15)           &     0\\ &
&                0+,0-,0-   &     4,4&&\\ &
&                0-,0+,\tau-    &    4,4&&\\ &
&                \tau-,\tau-,0+     &   4,4&&\\
\hline
(1-3)&(0+,0-,0-),(0-,0+,1-)  & (\tau,\tau,\tau)             &&    (11,11)              &  C\\ &
&                0+,0-,0-  &      4,4&&\\ &
&                0-,0+,1-   &     4,4&&\\
\hline
(1-4)&(0+,0-,0-),(0-,1+,1-)  & (\tau,\tau,\tau)             &&    (7,7)    &            A\\ &
&                0+,0-,0-  &      4,4&&\\
\hline
(1-5)&(1+,0-,0-),(0-,1+,1-)  & (\tau,\tau,\tau) &&                (3,3)       &         B\\
\hline
(1-6)&(0+,0-,0-),(0-,0+,0-)   &(\tau,\tau,0)            &&     (31,7)      &          0\\ &
&                0+,0-,0- &       8&&\\ &
&                0-,0+,0-  &      8&&\\ &
&                0-,0-,0+   &     8&&\\ &
&                \tau-,\tau-,0+    &    4,4&&\\
\hline
(1-7)&(0+,0-,0-),(0-,0+,1-) &  (\tau,\tau,0)              &&   (11,11)         &       C\\ &
&                0+,0-,0-   &     4,4&&\\ &
&                0-,0+,1-    &    4,4&&\\
\hline

\end{array}$\\
$\begin{array}{|l|llc|c|c|}
\hline
&G_T&      G_S\hspace*{\fill}&      &         (h^{1,1},h^{2,1})&\pi_1\\ &
 &   \hspace*{\fill}\mb{sectors}  &       \mb{contribution}&         &\\ 
\hline\hline

(1-8)&(0+,0-,0-),(0-,1+,0-)  & (\tau,\tau,0)               &&  (15,15)       &         0\\ &
&                0+,0-,0-  &      4,4&&\\ &
&                0-,1-,0+   &     4,4&&\\ &
&                \tau-,\tau1-,0+   &    4,4&&\\
\hline

(1-9)&(0+,0-,0-),(0-,1+,1-) &  (\tau,\tau,0)              &&   (7,7)      &          A\\ &
&                0+,0-,0-   &     4,4&&\\
\hline

(1-10)&(1+,0-,0-),(0-,1+,0-)  & (\tau,\tau,0)           &&      (11,11)  &              A\\ &
&                1-,1-,0+     &   4,4&&\\ &
&                \tau1-,\tau1-,0+    &  4,4&&\\
\hline

(1-11)&(1+,0-,0-),(0-,1+,1-)  & (\tau,\tau,0)            &&     (3,3)           &     B\\ 
\hline

\hline
\mb{{\textbf{Rank 2:}}}&&&&&\\\hline\hline
(2-1)&(0+,0-,0-),(0-,0+,0-)  & (1,1,1),(\tau,\tau,\tau)      &&   (15,3)   &             D\\ &
&                0+,0-,0-      &  4&&\\ &
&                0-,0+,0-     &   4&&\\ &
&                0-,0-,0+    &    4&&\\
\hline
(2-2)&(0+,0-,0-),(0-,0+,1-)  & (1,1,1),(\tau,\tau,\tau)    &&     (9,9)           &     C\\ &
&                0+,0-,0-   &     2,2&&\\ &
&                0-,0+,1-  &      2,2&&\\ &
&                1-,1-,0+ &       2,2&&\\
\hline
(2-3)&(0+,0-,0-),(0-,0+,0-)  & (1,1,1),(\tau,\tau,0)     &&    (17,5)            &    C\\ &
&                0+,0-,0-    &    4&&\\ &
&                0-,0+,0-   &     4&&\\ &
&                0-,0-,0+  &      4&&\\ &
&                \tau-,\tau-,0+ &       2,2&&\\
\hline
(2-4)&(0+,0-,0-),(0-,0+,1-) &  (1,1,1),(\tau,\tau,0)   &&      (11,11)       &         0\\ &
&                0+,0-,0-      &  2,2&&\\ &
&                0-,0+,1-     &   2,2&&\\ &
&                1-,1-,0+    &    2,2&&\\ &
&                \tau1-,\tau1-,0+ &     2,2&&\\
\hline
(2-5)&(0+,0-,0-),(0-,0+,\tau-)  & (1,1,1),(\tau,\tau,0)    &&     (7,7)           &     D\\ &
&                0+,0-,0-  &      2,2&&\\ &
&                0-,0+,\tau- &       2,2&&\\
\hline
(2-6)&(0+,0-,0-),(0-,0+,0-)  & (1,1,1),(\tau,1,0)      &&   (19,7)       &         0\\ &
&                0+,0-,0-     &   4&&\\ &
&                0-,0+,0-    &    4&&\\ &
&                0-,0-,0+   &     4&&\\ &
&                \tau1-,0+,1- &      2,2&&\\ &
&                \tau-,1-,0+ &       2,2&&\\
\hline
(2-7)&(0+,0-,0-),(0-,0+,\tau-) &  (1,1,1),(\tau,1,0)     &&    (9,9)       &         C\\ &
&                0+,0-,0-     &   2,2&&\\ &
&                0-,0+,\tau-    &    2,2&&\\ &
&                \tau1-,0+,\tau1- &     2,2&&\\
\hline
\end{array}$\\
$\begin{array}{|l|lrc|c|c|}
\hline
&G_T&      G_S\hspace*{\fill}&      &         (h^{1,1},h^{2,1})&\pi_1\\ &
 &   \hspace*{\fill}\mb{sectors}  &       \mb{contribution}&         &\\
\hline
\hline
(2-8)&(0+,0-,0-),(0-,\tau+,\tau-)  & (1,1,1),(\tau,1,0)   &&      (5,5)       &         A\\ &
&                0+,0-,0-       & 2,2&&\\
\hline

(2-9)&(0+,0-,0-),(0-,0+,0-)  & (0,1,1),(1,0,1)      &&   (27,3)          &      0\\ &
&                0+,0-,0-      &  4&&\\ &
&                0-,0+,0-     &   4&&\\ &
&                0-,0-,0+    &    4&&\\ &
&                0+,1-,1-   &     4&&\\ &
&                1-,0+,1-  &      4&&\\ &
&                1-,1-,0+ &       4&&\\
\hline
(2-10)&(0+,0-,0-),(0-,0+,\tau-) &  (0,1,1),(1,0,1)       &&  (11,11)          &      0\\ &
&                0+,0-,0-      &  2,2&&\\ &
&                0+,1-,1-     &   2,2&&\\ &
&                0-,0+,\tau-    &    2,2&&\\ &
&                1-,0+,\tau1-  &     2,2&&\\
\hline
(2-11)&(0+,0-,0-),(0-,\tau+,\tau-) &  (0,1,1),(1,0,1)     &&    (7,7)         &       A\\ &
&                0+,0-,0-  &      2,2&&\\ &
&                0+,1-,1- &       2,2&&\\
\hline
(2-12)&(\tau+,0-,0-),(0-,\tau+,\tau-) &  (0,1,1),(1,0,1)   &&      (3,3)        &        B\\
\hline
(2-13)&(0+,0-,0-),(0-,0+,0-)  & (1,1,0),(\tau,\tau,0)   &&      (21,9)        &        0\\ &
&               0+,0-,0- &       4&&\\ &
&                0-,0+,0-       & 4&&\\ &
&                0-,0-,0+      &  4&&\\ &
&                1-,1-,0+     &   2,2&&\\ &
&                \tau-,\tau-,0+    &    2,2&&\\ &
&                \tau1-,\tau1-,0+ &     2,2&&\\
\hline
(2-14)&(0+,0-,0-),(0-,0+,1-) &  (1,1,0),(\tau,\tau,0)      &&   (7,7)      &          D\\ &
&                0+,0-,0-     &   2,2&&\\ &
&                0-,0+,1-    &    2,2&&\\
\hline\hline
\mb{{\textbf{Rank 3:}}}&&&&&\\\hline
(3-1)&(0+,0-,0-),(0-,0+,0-)  &  (0,\tau,1),\hphantom{(0,\tau,1),}&&     (12,6)   &     0\\ &
&(\tau,1,0),(1,0,\tau) &&&\\ &
&                0+,0-,0-     &   2&&\\ &
&                0-,0+,0-    &    2&&\\ &
&                0-,0-,0+   &     2&&\\ &
&                0+,\tau-,1-  &      1,1&&\\ &
&                1-,0+,\tau- &       1,1&&\\ &
&                \tau-,1-,0+&        1,1&&\\
\hline

\end{array}$\\
$\begin{array}{|l|lrc|c|c|}
\hline
&G_T&      G_S\hspace*{\fill}&      &         (h^{1,1},h^{2,1})&\pi_1\\ &
 &   \hspace*{\fill}\mb{sectors}  &       \mb{contribution}&         &\\
\hline\hline
(3-2)&(0+,0-,0-),(0-,0+,1-)  & (0,\tau,1),\hphantom{(0,\tau,1),} &&    (12,6)      &  0\\ &
&(\tau,1,0),(1,0,\tau)  &&&\\ &
&                0+,0-,0-       & 1,1&&\\ &
&                0-,0+,1-      &  2&&\\ &
&                0-,\tau-,0+     &   2&&\\ &
&                0+,\tau-,1-    &    2&&\\ &
&                1-,0+,\tau1-  &     1,1&&\\ &
&                \tau-,\tau1-,0+ &      1,1&&\\
\hline

(3-3)&(0+,0-,0-),(0-,0+,0-) &  (1,1,0),  \hphantom{(0,\tau,1),} &&   (17,5) &       0\\ &
&(\tau,\tau,0),(1,\tau,1) &&&\\ &
&                0+,0-,0- &       2&&\\ &
&                0-,0+,0-&        2&&\\ &
&                0-,0-,0+       & 2&&\\ &
&                0+,\tau1-,1-     &  2&&\\ &
&                \tau1-,0+,1-    &   2&&\\ &
&                1-,1-,0+    &    1,1&&\\ &
&                \tau-,\tau-,0+   &     1,1&&\\ &
&                \tau1-,\tau1-,0+&      2&&\\
\hline
(3-4)&(0+,0-,0-),(0-,0+,\tau-) &  (1,1,0), \hphantom{(0,\tau,1),}  &&   (7,7)      &  C\\ &
&(\tau,\tau,0),(1,\tau,1) &&&\\ &
&                0+,0-,0-     &   1,1&&\\ &
&                0-,0+,\tau-    &    1,1&&\\ &
&                0+,\tau1-,1-  &     1,1&&\\ &
&                \tau1-,0+,\tau1- &      1,1&&\\
\hline
(3-5)&(0+,0-,0-),(0-,0+,0-)  & (0,1,1), \hphantom{(0,\tau,1),} &&    (15,3)      &  C\\ &
&(1,0,1),(\tau,\tau,\tau) &&&\\ &
&               0+,0-,0-     &   2&&\\ &
&                0-,0+,0-    &    2&&\\ &
&                0-,0-,0+   &     2&&\\ &
&                0+,1-,1-  &      2&&\\ &
&                1-,0+,1- &       2&&\\ &
&                1-,1-,0+&        2&&\\
\hline
(3-6)&(0+,0-,0-),(0-,0+,\tau-)  & (0,1,1), \hphantom{(0,\tau,1),} &&    (9,9)     &   0\\ &
&(1,0,1),(\tau,\tau,\tau) &&&\\ &
&                0+,0-,0-      &  1,1&&\\ &
&                0-,0+,\tau-     &   1,1&&\\ &
&                \tau-,\tau-,0+    &    1,1&&\\ &
&                0+,1-,1-   &     1,1&&\\ &
&                1-,0+,\tau1- &       1,1&&\\ &
&               \tau1-,\tau1-,0+    &    1,1&&\\
\hline
\end{array}$\\
$\begin{array}{|l|lrc|c|c|}
\hline
&G_T&      G_S\hspace*{\fill}&      &         (h^{1,1},h^{2,1})&\pi_1\\ &
 &   \hspace*{\fill}\mb{sectors}  &       \mb{contribution}&         &\\
\hline\hline
\mb{{\textbf{Rank 4:}}}&&&&&\\\hline
(4-1)&(0+,0-,0-),(0-,0+,0-) &  (0,\tau,1),(\tau,1,0),  && (15,3)  & 0\\ &
 &  (1,0,\tau),(1,1,1)  &&   & \\ &
&                0+,0-,0-    &    1&&\\ &
&                0+,\tau-,1-   &     1&&\\ &
&                0+,1-,\tau1- &      1&&\\ &
&                0+,\tau1-,\tau-&       1&&\\ &
&                0-,0+,0-        &1&&\\ &
&                1-,0+,\tau-       & 1&&\\ &
&                \tau1-,0+,1-     &  1&&\\ &
&                \tau-,0+,\tau1-    &   1&&\\ &
&                0-,0-,0+    &    1&&\\ &
&                \tau-,1-,0+   &     1&&\\ &
&                1-,\tau1-,0+ &      1&&\\ &
&                \tau1-,\tau-,0+&       1&&\\
\hline
\end{array}$\\[1em]
As follows from the discussion in Section \ref{dt}, for 
some of the orbifolds listed above there may be a non-trivial effect
on the resulting Hodge numbers, when twisted actions of $G$ are allowed
on the blow-ups of curves of singularities in $X/G$, that is when discrete
torsion is taken into account. The most popular example of this type
is our model $(0-1)$ which was extensively studied in \cite{vawi95}. 
In that paper, the authors discover that turning on nontrivial 
discrete torsion $\eps\in\Z_2$
in this example of an orbifold by $G\cong(\Z_2)^2$ produces its 
mirror partner. From our discussion it is indeed not hard to check
that the effect of $\eps=-1$ instead of $\eps=1$ is
a swap of the Hodge numbers $h^{1,1},\, h^{2,1}$. 

In general, for each model in our list, an interchange of $h^{1,1}$ and $h^{2,1}$ 
can be achieved by choosing a certain value for discrete torsion. Other types of
discrete torsion exist for some of 
the models, typically producing Hodge number pairs that are intermediate between those of the original orbifold and its mirror. All Hodge pairs obtained this way arise also from other orbifolds without discrete torsion, so they can also be found elsewhere in our table.
we list the possible Hodge numbers below:
$$\begin{array}{|c|c|}
\hline
\mb{model}&\mb{possible values of }(h^{1,1},h^{2,1})\\
\hline\hline
(2-9)& (27,3),\, (15,15),\, (3,27)\\
(3-3)& (17,5),\, (11,11),\, (5,17)\\
(3-5)& (15,3),\, (9,9),\, (3,15)\\
(4-1)& (15,3),\,(12,6),\, (9,9),\, (6,12),\, (3,15)\\
 \hline
\end{array}$$  
Thus we find that the main effect of discrete torsion 
on our  list of possible Hodge numbers $(h^{1,1},h^{2,1})$
is a symmetrization with respect to $h^{1,1}\leftrightarrow h^{2,1}$.
We therefore refrain from a further study of the resulting geometries.
In particular, we do not examine whether there are any Calabi-Yau threefolds 
obtained from allowing non-trivial discrete torsion which
agree with any of the models listed in Table 1, or their mirrors.
\subsubsection*{Table 2: The orbifold family tree}
Below we list all those orbifoldings which are realized as
free quotients between orbifolds listed in Table 1. 
In the diagrams, each entry is of the form
${(r-n)\choose (h^{1,1},h^{2,1})}$, where $(r-n)$ is the label
in Table 1, and $(h^{1,1},h^{2,1})$ gives the corresponding 
Hodge numbers.

We first list all free quotients relating orbifolds $X/G$ with
fundamental group of type $A$:
$$
\begin{array}{ccccccccccc}
 &&{(1-4)\choose (7,7)}&   \longrightarrow &  {(2-8)\choose (5,5)}\\
&\nearrow&&\nearrow\\
{(0,3)\choose (11,11)}&\longrightarrow&{(1-9)\choose (7,7)}\\
&\searrow&&\searrow\\
&&{(1-10)\choose (11,11)}&\longrightarrow&{(2-11)\choose (7,7)}\\[7pt]
\end{array}
$$
Next we list all free quotients relating orbifolds $X/G$ with
fundamental group of type $B$:
$$
\begin{array}{ccccccccccc}
{(0,4)\choose (3,3)} &\longrightarrow&{(1-5)\choose (3,3)}\\
&\searrow\\
&&{(1-11)\choose (3,3)}&\longrightarrow &  {(2-12)\choose (3,3)}\\[7pt]
\end{array}
$$
We list the remaining quotients, where the three columns give orbifolds
with fundamental group $0,\,C,\,D$, respectively:

$$
\begin{array}{ccccccccccc}
{(0-1)\choose (51,3)}  &\longrightarrow &{(1-1)\choose (27,3)} 
&\longrightarrow &{(2-1)\choose (15,3)}\\[7pt]
{(0,2)\choose (19,19)} &\longrightarrow&{(1-3)\choose (11,11)}&   \longrightarrow &  {(2-5)\choose (7,7)}\\
&\searrow&&\nearrow\\
&&{(1-7)\choose (11,11)}&    \longrightarrow & {(2-14)\choose (7,7)}\\[7pt]
{(1-2)\choose (15,15)} & \longrightarrow & {(2-2)\choose (9,9)} \\[7pt]
{(1-6)\choose (31,7)} & \longrightarrow  &{(2-3)\choose (17,5)}\\[7pt]
{(1-8)\choose (15,15)} & \longrightarrow & {(2-7)\choose (9,9)} \\[7pt]
{(2-4)\choose (11,11)}\\[7pt]
{(2-6)\choose (19,7)}\\[7pt]  
{(2-9)\choose (27,3)} & \longrightarrow & {(3-5)\choose (15,3)} \\[7pt]
{(2-10)\choose (11,11)}& \longrightarrow & {(3-4)\choose (7,7)} \\[7pt]
{(2-13)\choose (21,9)}\\[7pt]
{(3-1)\choose (12,6)}\\[7pt]
{(3-2)\choose (12,6)}\\[7pt]
{(3-3)\choose (17,5)}\\[7pt]
{(3-6)\choose (9,9)} \\[7pt]
{(4-1)\choose (15,3)}\\[7pt]
\end{array}
$$
\subsection{On coincidences in the list}\label{coincidences}
The Hodge numbers and fundamental group data in Table 1 do not suffice to
completely distinguish the orbifolds on our list.  We can obtain some
additional topological information:
\begin{lemma} 
The four orbifolds whose fundamental group is an extension of
$(\Z_2)^2$ by $\Z^6$ \mb{\rm(}``type B"\mb{\rm)}, labeled 
\mb{$\rm(0-4), (1-5), (1-11), (2-12)$,}
all with Hodge numbers $(3,3)$, are topologically inequivalent: their
fundamental groups are not isomorphic.
\end{lemma}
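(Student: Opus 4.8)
The plan is to compute each of the four fundamental groups explicitly as an extension of $(\Z_2)^2$ by $\Z^6$ using the procedure of Section~1.5, and then distinguish the four resulting groups by an isomorphism invariant. For each model labeled $(0-4)$, $(1-5)$, $(1-11)$, $(2-12)$, I would form the group $\widetilde G$ as the extension of $G$ by the lattice $\Lambda \cong \Z^6$, presented concretely: $\widetilde G$ is generated by the six lattice translations together with lifts of the twist generators $G_T$ and shift generators $G_S$ listed in Table~1. Since these are all type-$B$ models, the subgroup $F$ generated by elements with fixed points must equal all of $\widetilde G$ modulo a copy of $\Z^6$, yielding $\pi_1 = \widetilde G / F \cong (\Z_2)^2$-extension of $\Z^6$, consistent with the Hodge-number computation showing no $A_1$-curve contributions survive invariantly (all four have only the bulk $(3,3)$ contribution).

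**Next I would identify the precise extension data.** The four groups share the same abelianization-free part $\Z^6$ and the same quotient $(\Z_2)^2$, so they are not separated by that coarse data alone; the content of the lemma is that the \emph{extension structure} differs. The natural fine invariant is the action of the quotient $(\Z_2)^2$ on $\Z^6 = H_1$, together with the resulting abelianization $H_1(X/G) = \pi_1^{\mathrm{ab}}$ and, where that fails, the torsion in $H_1$ or the isomorphism type of the full nonabelian group. Concretely, the twist generators act on $\Lambda$ by the $\pm 1$ sign patterns of $T_0$, and the half-lattice shifts $\epsilon_i$ appearing in each model's generators determine which translations become identified with twist conjugates inside $F$. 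I would tabulate, for each model, the induced $\Z_2^2$-module structure on $\Z^6$ and compute $H_1(X/G)$ as the cokernel of the relevant relation matrix; differing torsion subgroups of $H_1$ (for instance distinct numbers of $\Z_2$ summands, or the presence of a $\Z_4$ factor) immediately give non-isomorphic $\pi_1$.

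**The step I expect to be the main obstacle** is the case where two of the four models yield the \emph{same} abelianization $H_1$, since then abelian invariants do not separate them and one must descend to the genuinely nonabelian structure of the extension. In that event I would distinguish them by a more refined invariant: either by examining the commutator subgroup $[\widetilde G, \widetilde G]$ and the induced extension of $(\Z_2)^2$ by $\Z^6/[\widetilde G,\widetilde G]$, or by computing the second integral homology / the cohomology class in $H^2((\Z_2)^2, \Z^6)$ classifying the extension, checking that the four classes lie in distinct orbits under the automorphisms of $(\Z_2)^2$ and of the module $\Z^6$. Because all four groups are crystallographic (they are lattices in the isometry group of $\mathbb{R}^6$), an alternative and cleaner route would be to invoke the classification of such space groups: two crystallographic groups are isomorphic as abstract groups if and only if they are affinely equivalent (Bieberbach's theorems), so it suffices to show the four affine $\Z_2^2$-actions-with-shifts are pairwise inequivalent under the allowed change-of-basis, permutation, and shift operations of Section~1.2 — essentially the same inequivalence already established at the level of the orbifold data, now upgraded to a statement about the groups themselves.
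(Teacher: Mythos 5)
Your main line of attack is essentially the paper's own proof: the four groups act freely on $\C^3$, so $\pi_1=\widetilde G$ is a crystallographic extension of $(\Z_2)^2$ by a rank-six lattice; one then computes the abelianizations $H_1=G/[G,G]$ and compares their torsion, and for the one pair that this fails to separate (it turns out to be $(1\mbox{-}11)$ and $(2\mbox{-}12)$, both with $H_1\cong(\Z_4)^2\times(\Z_2)^2$) one falls back on the $(\Z_2)^2$-module structure of the lattice, which is an isomorphism invariant because the projection to $(\Z_2)^2$ is canonically recoverable from the abstract group (the paper compares the index of the sum of the three fixed sublattices; your ``induced module structure'' is the same data). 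Two cautions. First, your justification that $\pi_1$ has the claimed form is garbled: for these type-$B$ models the subgroup $F$ generated by elements with fixed points is \emph{trivial} (the action on $\C^3$ is free), not ``all of $\widetilde G$ modulo a copy of $\Z^6$''; the conclusion you draw is nevertheless the right one. Second, your proposed ``cleaner'' alternative via Bieberbach's theorems is circular as stated: Bieberbach reduces abstract isomorphism to conjugacy in the full affine group of $\R^6$, whereas the inequivalence established in Section 1.2 is only under the much smaller set of operations preserving the product structure $E_1\times E_2\times E_3$ (coordinate permutations, lattice-basis changes in each $E_i$, and shifts). Inequivalence under those restricted moves does not imply affine non-conjugacy --- indeed the whole point of the lemma is that distinct entries of the table might still give homeomorphic quotients --- so that route would require genuinely new work and cannot simply ``upgrade'' the Section 1.2 classification.
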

\begin{proof} Each of these four orbifolds $X_i$ is a quotient of $\C^3$ by a
group $G_i$ acting without fixed points, so $F_i = \{1\}, G_i = G_i/F_i =
\pi_1(X_i)$.

Each $G_i$ is generated by the two twists $(1+,0-,0-), (0-,1+,1-)$, plus a
rank 6 lattice $L_i$ of translations, with respective generators:
\begin{eqnarray*}
(0-4):&& (2,0,0), (0,2,0), (0,0,2), (2\tau,0,0), (0,2\tau,0), (0,0,2\tau);
\\
(1-5):&& (2,0,0), (0,2,0), (0,0,2), (2\tau,0,0), (0,2\tau,0), (\tau,\tau,\tau);
\\
(1-11):&& (2,0,0), (0,2,0), (0,0,2), (2\tau,0,0), (\tau,\tau,0), (0,0,2\tau);
\\
(2-12):&& (2,0,0), (0,2,0), (0,0,2), (2\tau,0,0), (0,\tau, \tau), (\tau,0,\tau).
\end{eqnarray*}
The commutator $[G,G]$ is generated by:
\begin{eqnarray*}
[(1+,0-,0-), (0-,1+,1-)] &=& (2,-2,2);\\{}
[(1+,0-,0-), L_i] &=& \left\{(0,-2b,-2c) | (a,b,c) \in L_i\right\};\\{}
[(0-,1+,1-), L_i] &=& \left\{(-2a,0,-2c) | (a,b,c) \in L_i\right\}.
\end{eqnarray*}
Consider the four quotients $H_1 = G/[G,G]$ for $G=G_i$. The images of the
two twists square to the non zero elements $(2,0,0)$ and $(0,2,0)$
respectively, which in all four cases are distinct in $H_1$. 
These twists therefore generate a subgroup $(\Z_4)^2$ of $H_1$
which contains the image of the $(\Z_2)^2$ orbit of
the $\Z^3$ of 1-cycles. Therefore, each of the quotients $H_1 = G/[G,G]$ is
a product $H^1 \times H^\tau$ where $H^1 = (\Z_4)^2$ comes from the twists
and the three $1$-cycles, while $H^\tau$ comes from the three $\tau$-cycles.
Explicitly, the four groups $H^\tau$ are: $(\Z_2)^3, \Z_4, (\Z_2)^2,
(\Z_2)^2$, so the four quotients $H_1 = G/[G,G]$ are:
\begin{eqnarray*}
(0-4):&& (\Z_4)^2 \times (\Z_2)^3,\\
(1-5): &&(\Z_4)^3,\\
(1-11): &&(\Z_4)^2 \times (\Z_2)^2,\\
(2-12): &&(\Z_4)^2 \times (\Z_2)^2.
\end{eqnarray*}
It remains to distinguish between the last two orbifolds. For that, note
that in these cases the extension $0 \to L_i \to G_i \to (\Z_2)^2 \to 0$
is uniquely determined by $L_i$. In fact, the projection $G_i \to
(\Z_2)^2$ is just the composition of the abelianization $G \to G/[G,G]$
with multiplication by 2 in the abelian group $G/[G,G]$. So any
isomorphism of the two groups $G_i$ induces an isomorphism of the
extensions, hence of the actions of $(\Z_2)^2$ on $L_i$. This action
sends $(a,b,c) \in L_i$ to $(a,-b,-c), (-a,b,-c),(-a,-b,c)$ respectively.
The sum of the three sublattices fixed under these three involutions is
the lattice $L_{(0-4)}$, which has index 2 in $L_i$ for $i=( 1-11 )$ and
index 4 for $i= ( 2-12 )$, completing the proof that the four
fundamental groups are pairwise non-isomorphic.
\vspace{3pt}
\end{proof}
Unfortunately, comparable topological information is harder to get for
fundamental groups of the other types. For example, one can check that the
extension of $\Z_2$ by $\Z^2$, of ``type A", is unique, with homology
group $H_1(X) = (\Z_2)^3$. This leaves us with the following
undistinguished cases:
$$\begin{array}{|c|c|c|}
\hline
(h^{1,1},h^{2,1})&  \pi_1& \mb{cases}   \\
\hline\hline
(15,15)&                0  &      (1-2),(1-8)\\
(11,11)  &              0    &    (2-4),(2-10)\\
(12,6)  &              0      &  (3-1),(3-2)\\
(11,11) &               A      &  (0-3),(1-10)\\
 (7,7)   &              A       & (1-4),(1-9),(2-11)\\
 (11,11)  &              C    &    (1-3),(1-7)\\
 (9,9)  &              C    &    (2-2),(2-7)\\
 (7,7)  &              D   &    (2-5),(2-14)\\
 \hline
\end{array}$$  
For some of these cases we are able to give a definite answer whether
or not the corresponding threefolds agree. Specifically, 
the models
$(0-3)$ and $(1-10)$ are of same topological type, although their
complex structure is different, as we shall see in Section \ref{BV}.
Together with the orbifold family tree of Table 2
this suggests that the three models $(1-4)$, $(1-9)$, and
$(2-11)$ may be topologically equivalent as well. We know that 
the models $(1-3)$ and $(1-7)$ are distinct as families of complex varieties,
and so are $(2-5)$ and $(2-14)$, as we shall see in Section \ref{schoen}.
It is not clear to us whether they are topologically equivalent.
\section{Free fermion models}\label{ffm}
In this section, we briefly review free fermion models, 
giving the basic structure of those conformal field theories which are obtained
from free fermion constructions. Moreover, we explain how the particular geometric
orbifolds that we have classified in Section \ref{classification} are related to
these models.
\subsection{Model building with free fermions}\label{modbuild}
We use free fermion models to construct  
heterotic string theories in $D=4$ 
dimensions. 
As the name suggests, the basic ingredients to free fermion models are the
representations of the free fermion algebra, see Appendix \ref{ffr}.  
Let $\m H_0,\, \m H_1$ denote the irreducible Fock space representations of the free 
fermion algebra in the NS and the R sector, respectively, enlarged by 
$(-1)^F$ with $F$ the worldsheet fermion number operator.
Roughly, a free fermion model is obtained from an appropriate tensor
product of Fock spaces $\m H_0,\, \m H_1$ by a certain 
projection, whose properties are partly governed by the consistency conditions
of string theory.

In a heterotic theory the left handed side carries at least
$N=1$ supersymmetry, whereas
the right handed side is not supersymmetric. We fermionize
all internal degrees of freedom, thus allowing a description in terms
of free fermions. 
External bosons are not fermionized, since they are free uncompactified
fields, where fermionization does not apply to add degrees of freedom.
The various anomaly cancellation conditions then dictate the following
structure:
In the left handed sector,
we have four external bosons and fermions, two of which are transversal
in the light cone gauge,
$\partial X_\mu,\,\psi^\mu,\,\mu\in\{0,1\}$. Since the superstring critical dimension is $10$, where each coordinate
direction corresponds to three free fermions,
there are
$(10-4)\cdot 3=18$ internal fermions $\chi^i,\,y^i,\,w^i,\,i\in\{1,\ldots,6\}$. The left
handed worldsheet supercurrent, which generates the local conformal transformations, is then given in light cone gauge by: 
$\sum_\mu \nop{\psi^\mu\partial X_\mu}+\sum_i \nop{\chi^i y^i w^i}$ 
\cite{abkw86,gool85,gno85,gko86,dkpr85}. On the right handed
side we have four external bosons, two of which are transversal,
$\qu\partial\qu X_\mu,\,\mu\in\{0,1\}$. Furthermore, given the 
bosonic critical dimension $26$ with each coordinate direction corresponding
to two free fermions,
there are $(26-4)\cdot2=44$
internal fermions $\qu\Phi^i,\,i\in\{1,\ldots,44\}$. All in all, including 
external fermions, we have
$20+44=64$ fermionic degrees of freedom, and we introduce indices
$j\in\{1,\ldots,64\}$ for them in the order
$\psi^0,\,\psi^1,\,\chi^1,\dots,\chi^6,\,y^1,\ldots,y^6,\,
w^1,\ldots,$ $w^6$, $\qu\Phi^1,\ldots,$ $\qu\Phi^{44}$.

To construct a full theory we need to specify which 
combinations of spin structures for each
of the $64$ real free fermions contribute.
Since many of our free fermions result from bosonization, the respective
spin structures are coupled pairwise. This is also
always true for $\psi^0,\,
\psi^1$, to ensure consistency of the coupling with worldsheet gravitinos.
If the $y^j,\,w^j$ combine to six left handed bosons with currents
$i\nop{y^j w^j}$, then each
pair $y^j,\,w^j$ must have coupled spin structures. This is the case for
free fermion models with geometric interpretation on a real six-torus, but
not in general, so we will not assume such couplings in general.
However, if our free fermion
model arises from a heterotic compactification   on a 
Calabi-Yau manifold with
gauge group in $E_8\times E_8$, then among the right handed 
$\qu\Phi^i$ there must be
six pairs yielding the antiholomorphic
partners of the $y^i, w^i$,
which we then denote
$\qu y^i,\,\qu w^i$ instead of $\qu\Phi^1,\ldots,\qu\Phi^{12}$. 
Then $y^i+\qu y^i$ and $w^i+\qu w^i$ are the fields corresponding to the
respective fermionized coordinates, so $y^i, \qu y^i$ and $w^i, \qu w^i$ are
pairs with coupled spin structures. 
The remaining $32$ 
right handed $\qu\Phi^i$ split into two sets $\qu\varphi^i,\,\qu\phi^i$
of eight complex Dirac fermions each, to allow bosonization to $8$ 
independent currents for each $E_8$ summand of the gauge Kac-Moody algebra, 
cf.\ \cite{abkw86,gool85,gno85,gko86}. 
Each Dirac fermion is equivalent to
one boson, so the notation implies that the real and imaginary parts
of $\qu\varphi^i,\,\qu\phi^i$  also
have coupled spin structures. Finally, the 
heterotic theory on a Calabi-Yau manifold
has $(N,\qu N)=(2,0)$ worldsheet supersymmetry with $U(1)$ current given
by, say, $\sum i\nop{\chi^{2j-1}\chi^{2j}}$, which is
reflected in a pairwise coupling of spin structures for the $\chi^i$, here 
$(\chi^1,\chi^2),\,(\chi^3,\chi^4),\,(\chi^5,\chi^6)$. 
%
%

For what follows we therefore assume that
the $64$ fermions are arranged 
into pairs with coupled spin structures. In other words, we fix an injective vectorspace
homomorphism
$\iota:\F_2^{32}\longrightarrow\F_2^{64}$ and
a fixpoint free involution
$\sigma$ on $\{1,\ldots,64\}$ such that for all $\alpha\in\im(\iota)$
and all $i\in\{1,\ldots,64\}$ we have $\alpha_i=\alpha_{\sigma(i)}$.
For $\alpha\in\im(\iota)$ we then define
$$
\m H_\alpha:=\mb{pr}\left(\bigotimes_{i=1}^{64} \m H_{\alpha_i}^i\right),
$$
where\footnote{Here and in the following, indices $j$
serve as a reminder that we are considering the $j^{\mbox{\tiny{}th}}$
free fermion,  $j\in\{1,\ldots,64\}$.} 
$\mb{pr}$ acts trivially on all $\m H_0^i$ but projects 
$\m H_1^j\otimes\m H_1^{\sigma(j)}$ onto a chosen irreducible representation
of the algebra 
generated by the $\psi_a^j,\,\psi_a^{\sigma(j)}$ with $a\in\Z$ and the 
total worldsheet fermion number operator $(-1)^{F_j+F_{\sigma(j)}}$
(see Appendix \ref{ffr}).

While a single fermion can live in one of two sectors (NS or R), states in
our full theory fall into sectors ${\m H}_\alpha$ characterized by
$\alpha\in\im(\iota)\subset\F_2^{64}$. 
By the above, the relevant contributions to the
partition function have the form
\beqn{trace}
Z\!\!\left[\!\!\begin{array}{c}\alpha\\\beta\end{array}\!\!\right]\!\!(\tau,\qu\tau) 
&:=& \prod_{j=1}^{20} 
Z\!\!\left[\!\!\begin{array}{c}\alpha_j\\\beta_j\end{array}\!\!\right]\!\!(\tau)
\prod_{j=21}^{64} Z\!\!\left[\!\!\begin{array}{c}\alpha_j\\\beta_j\end{array}\!\!\right]\!\!(\qu\tau)\nonumber\\
&\stackrel{\req{traceformula}}{=}&
\tr[{\m H}_\alpha]\left[\prod_j(-1)^{\beta_j F_j}
q^{L_0-c/24}\, \qu q^{\qu L_0-\qu c/24}\right],
\eeqn
where $L_0=\sum_{j=1}^{20}L_0^j,\,\qu L_0=\sum_{j=21}^{64}\qu L_0^j$
are the total left and right handed Virasoro zero modes and 
$c=10,\,\qu c=22$ the total central charges. In the following we 
also abbreviate
$$
\prod_j(-1)^{\beta_j F_j} = e^{\pi i\beta\cdot F}.
$$
As mentioned above, 
our models include two left-handed external fermions, 
$\psi^\mu,\,\mu\in\{0,1\}$, and consistency of their coupling to
worldsheet gravitinos translates into
the requirement that  these two 
must always have the same spin structures. 
This means that we can assign a definite spin statistic 
$\delta_\alpha\in\{\pm1\}$ to each sector
${\m H}_\alpha$ which contributes to
our theory:
$$
\fa\alpha\in\F_2^{64}\mbox{ with }\alpha_1=\alpha_2\colon\quad
\delta_\alpha:= (-1)^{\alpha_1},
$$
where $\alpha\in\im(\iota)$ implies $\alpha_1=\alpha_2$.
By definition,
sectors ${\m H}_\alpha$ with positive spin statistic $\delta_\alpha=1$
contain the spacetime bosons, whereas
sectors ${\m H}_\alpha$ with negative spin statistic $\delta_\alpha=-1$
contain the spacetime fermions.
We can thus use the spacetime fermion number operator $F_S$, where
$(-1)^{F_S}$ acts trivially on spacetime bosons and by multiplication with
$(-1)$ on spacetime fermions. We can now make an ansatz for the 
Hilbert space $\m H$ and the partition 
function 
\beq{genpf}
Z(\tau,\qu\tau) =
\tr[{\m H}] \left[(-1)^{F_S}
q^{L_0-c/24}\, \qu q^{\qu L_0-\qu c/24}\right]
\eeq
of the total fermionized theory. Namely, we begin by requiring that $Z$ has the form 
\beq{pf}
Z(\tau,\qu\tau) 
= {\textstyle{1\over|{\m F}|}}\sum_{\alpha,\beta\in{\m F}}  
C\!\!\left[\!\!\begin{array}{c}\alpha\\\beta\end{array}\!\!\right]Z\!\!\left[\!\!\begin{array}{c}\alpha\\\beta\end{array}\!\!\right]\!\!(\tau,\qu\tau), 
\quad
C\!\!\left[\!\!\begin{array}{c}\alpha\\\beta\end{array}\!\!\right]\in\Z,
\eeq
for some ${\m F}\subset\F_2^{64}$ such that for every 
$\alpha\in{\m F}$ there is a $\beta\in{\m F}$ with 
${C\!\!\left[\!\!\begin{array}{c}\alpha\\\beta\end{array}\!\!\right]\neq0}$.
Note that if $Z$ is obtained from some pre-Hilbert space 
$\m H$ by \req{genpf}, then $Z$
vanishes iff in $\m H$ there is a 
$1\colon1$ correspondence between spacetime bosons and spacetime fermions
which respects the $(L_0,\qu L_0)$ eigenvalues. 
In other words, 
$Z(\tau,\qu\tau)\equiv0$ iff our theory possesses spacetime supersymmetry, see also Section \ref{SUSY}.

Let us now give a description of $\m H$ which yields \req{genpf} with \req{pf}, and
let us discuss appropriate
restrictions on the coefficients
$C\!\!\left[\!\!\begin{array}{c}\alpha\\\beta\end{array}\!\!\right]$.
First rewrite \req{pf} with \req{trace} to find
\beqn{sector}
Z(\tau,\qu\tau)
=\sum_{\alpha\in{\m F}} \delta_\alpha Z_\alpha(\tau,\qu\tau), \quad
Z_\alpha(\tau,\qu\tau)
&=&
\tr[{\m H}_\alpha]\left[P^{\m F}
q^{L_0-c/24}\, \qu q^{\qu L_0-\qu c/24}\right],\nonumber\\
P^{\m F}\colon \bigoplus_{\alpha\in\m F}{\m H}_\alpha \longrightarrow 
\bigoplus_{\alpha\in\m F} {\m  H}_\alpha,
\quad\quad\;\; P^{\m F}_{\mid \m H_\alpha} 
&:=& {\textstyle{1\over|{\m F}|}}\sum_{\beta\in{\m F}} 
\delta_\alpha C\!\!\left[\!\!\begin{array}{c}\alpha\\\beta\end{array}\!\!\right]
\cdot e^{\pi i\beta\cdot F},
\eeqn
where by construction $(-1)^{F_S}_{\mid\m H_\alpha}=\delta_\alpha$.
Then \req{genpf} holds with ${\m H}= P^{\m F}\oplus_{\alpha\in{\m F}} {\m H}_\alpha$.

We wish to interpret $P^{\m F}$ 
as a projection operator. 
To ensure that $P^{\m F}\circ P^{\m F}=P^{\m F}$ one checks that it 
suffices to assume that $\m F\subset\F_2^{64}$ is a vector space and
that 
$$
\fa\alpha,\beta,\gamma\in\m F\colon\quad\quad
C\!\!\left[\!\!\begin{array}{c}\alpha\\\beta+\gamma\end{array}\!\!\right]
= \delta_\alpha\, C\!\!\left[\!\!\begin{array}{c}\alpha\\\beta\end{array}\!\!\right]
C\!\!\left[\!\!\begin{array}{c}\alpha\\\gamma\end{array}\!\!\right].
$$
In other words,
$$
\fa\alpha\in\m F\colon\quad\quad
\chi_\alpha:\m F\longrightarrow\{\pm1\}, \quad
\chi_\alpha(\beta):=
\delta_\alpha\delta_\beta
C\!\!\left[\!\!\begin{array}{c}\alpha\\\beta\end{array}\!\!\right]
$$
is a character. For later convenience we introduce the notation
$$
\chi:\m F\times\m F\longrightarrow\{\pm1\}, \quad
\chi\!\!\left[\!\!\begin{array}{c}\alpha\\\beta\end{array}\!\!\right]
:=
\delta_\alpha\delta_\beta
C\!\!\left[\!\!\begin{array}{c}\alpha\\\beta\end{array}\!\!\right]
$$
and indeed assume for the following that $\m F$ is a vector space and
that
\beq{char}
\fa\alpha,\beta,\gamma\in\m F\colon\quad\quad
\chi\!\!\left[\!\!\begin{array}{c}\alpha\\\beta+\gamma\end{array}\!\!\right]
= \chi\!\!\left[\!\!\begin{array}{c}\alpha\\\beta\end{array}\!\!\right]
\chi\!\!\left[\!\!\begin{array}{c}\alpha\\\gamma\end{array}\!\!\right]\in
\{\pm1\}.
\eeq
Then the remaining restrictions
on the possible choices of 
$\chi\!\!\left[\!\!\begin{array}{c}\alpha\\\beta\end{array}\!\!\right]$ 
will come from the fact that the partition function $Z$
must be modular invariant and that all fields in 
${\m H}=P^{\m  F}{\bigoplus\limits_{\alpha\in\m F}}{\m H}_\alpha$ 
must be pairwise semi-local.
We introduce
\beqn{skp}
\fa\alpha,\beta\in\im(\iota)\colon\;\;
\alpha\cdot\beta:={1\over2}\sum_{j=1}^{20} \alpha_j\beta_j
- {1\over2}\sum_{j=21}^{64} \alpha_j\beta_j
&=& \alpha_L\cdot\beta_L - \alpha_R\cdot\beta_R,\;\\
\alpha^2&:=&\alpha\cdot\alpha\nonumber
\eeqn
with $\alpha_L,\,\beta_L\in\F_2^{20},\;\alpha_R,\,\beta_R\in\F_2^{44}$.
This induces a scalar product on $\F_2^{32}\cong\im(\iota)$ 
with signature $(n_+,n_-,n_0)$. The form \req{skp}
encodes the conformal dimensions of the
ground states in ${\m H}_\alpha$ for $\alpha\in\m F$ if we 
lift $\m F\subset\F_2^{64}$ to $\Z^{64}$ with entries in 
$\{0,1\}$. Namely, 
since NS ground states 
in our Fock space representation of the free fermion algebra have
vanishing conformal dimension, whereas the R ground states of a single
free fermion have dimension ${1\over16}$, we see that the
ground states of ${\m H}_\alpha$ have conformal dimensions
\beq{gsen}
(h,\qu h) = 
\left( {\alpha_L\cdot\alpha_L\over8}, {\alpha_R\cdot\alpha_R\over8}
\right).
\eeq
To get a well-defined fermionic theory, namely to ensure semi-locality,
all conformal spins have to 
be half integer, $h-\qu h\in{1\over2}\Z$. Since on ${\m H}_\alpha$,
the condition $h-\qu h\in{1\over2}\Z$ depends solely on the conformal
spin of the ground state as obtained from \req{gsen}, we thus need
\beq{levelmatching}
\fa\alpha\in{\m F}\colon\quad
\alpha^2\equiv0(4).
\eeq
Vice versa, if this constraint holds, then one checks that on $\m H$
a so-called OPE can be introduced consistently, as is necessary to
construct a CFT.

Every theory must contain a vacuum sector ${\m H}_0$, as follows from 
$0\in{\m F}$, and uniqueness of the vacuum dictates 
$C\!\!\left[\!\!\begin{array}{c}0\\0\end{array}\!\!\right]=1$. 
By the transformation properties listed 
in Appendix \ref{thetafctns}, the modular transformation
$\tau\mapsto-{1\over\tau+1}$ maps 
$Z\!\!\left[\!\!\begin{array}{c}0\\0\end{array}\!\!\right]$ to
$Z\!\!\left[\!\!\begin{array}{c}1\\0\end{array}\!\!\right]$. Since
$Z$ must be modular invariant, 
this implies that $0\in{\m F}$ entails $1\in{\m F}$,
the vector with all $64$ entries given by $1$. Hence 
by \req{levelmatching} we need 
$n_+-n_-\equiv0(4)$. In fact, assuming \req{char}, modular
invariance of the partition function holds if we also impose
\beq{trafo}
C\!\!\left[\!\!\begin{array}{c}\alpha\\\beta\end{array}\!\!\right]
=C\!\!\left[\!\!\begin{array}{c}\beta\\\alpha\end{array}\!\!\right],
\quad\quad
C\!\!\left[\!\!\begin{array}{c}\alpha\\\alpha+1\end{array}\!\!\right]
= -\delta_\alpha\,e^{-\pi i\alpha^2/4} .
\eeq
The latter condition is only consistent if $n_+-n_-\equiv4(8)$. 
These rules now allow us to restrict
attention to $C\!\!\left[\!\!\begin{array}{c}b_i\\b_j\end{array}\!\!\right]$
with $b_i,\,b_j\in B$ a basis of ${\m F}$.
For example,
\beq{examples}
\begin{array}{rclcrcl}
B_0&=&\{1\},\\
B_{SUSY}&=&\{1,s\}\quad
&\mbox{ with }
&s_i&=&\left\{\begin{array}{rl}1&\mbox{ if } i\leq8,\\
0&\mbox{ otherwise,}\end{array}\right.\\[1em]
B_{tor}&=&\{1,s,\xi_1,\xi_2\}\quad
&\mbox{ with }
&(\xi_1)_i&=&\left\{\begin{array}{rl}1&\mbox{ if } i>48,\\
0&\mbox{ otherwise,}\end{array}\right.\\[1em]
&&&&(\xi_2)_i&=&\left\{\begin{array}{rl}1&\mbox{ if } 32<i\leq48,\\
0&\mbox{ otherwise.}\end{array}\right.
\end{array}
\eeq
Consistent choices for the values of $C$ are obtained, e.g., as follows:
$$
\begin{array}[b]{|l||c|c|c|c|c|}
\hline\vphantom{\sum_{i\over2}^{k\over2}}
\;\;\beta\!\!\!\!&0&1&s&\xi_1&\xi_2\\[-0.2em]
\alpha&&&&&\\
\hline\hline\vphantom{\sum_{i\over2}^{k\over2}}
0&1&-1&-1&1&1\\\hline\vphantom{\sum_{i\over2}^{k\over2}}
1&-1&-1&-1&(-1)^{\eps_1}&(-1)^{\eps_2}\\\hline\vphantom{\sum_{i\over2}^{k\over2}}
s&-1&-1&-1&(-1)^{\eps_1^\prime}&(-1)^{\eps_2^\prime}\\\hline\vphantom{\sum_{i\over2}^{k\over2}}
\xi_1&1&(-1)^{\eps_1}&(-1)^{\eps_1^\prime}&(-1)^{\eps_1}&(-1)^{\kappa}\\
\hline\vphantom{\sum_{i\over2}^{k\over2}}
\xi_2&1&(-1)^{\eps_2}&(-1)^{\eps_2^\prime}&(-1)^{\kappa}&(-1)^{\eps_2}\\
\hline
\end{array}\quad\quad
\eps_i,\,\eps_i^\prime,\kappa\in\{0,1\}.
$$
\subsection{GSO and orbifold projections}\label{gsoorb}
In general, from the partition function \req{genpf} of a CFT we can read
the net number of (spacetime)
bosonic minus fermionic states in ${\m H}$
which are eigenvectors of the Virasoro zero modes $L_0,\,\qu L_0$ with any
pair of eigenvalues $h,\,\qu h\in\R$. 
The special structure of the partition
function \req{sector} allows us to determine the numbers of bosonic and 
fermionic contributions separately by hand. Namely,
for each $\beta\in{\m F}$, \req{trafo} implies
\beqn{proj}
\fa\alpha,\gamma\in{\m F}\colon\quad
C\!\!\left[\!\!\begin{array}{c}\alpha\\\gamma\end{array}\!\!\right]
Z\!\!\left[\!\!\begin{array}{c}\alpha\\\gamma\end{array}\!\!\right]
&+&
C\!\!\left[\!\!\begin{array}{c}\alpha\\\gamma+\beta\end{array}\!\!\right]
Z\!\!\left[\!\!\begin{array}{c}\alpha\\\gamma+\beta\end{array}\!\!\right]
\nonumber\\
&=&
C\!\!\left[\!\!\begin{array}{c}\alpha\\\gamma\end{array}\!\!\right]
\left(Z\!\!\left[\!\!\begin{array}{c}\alpha\\\gamma\end{array}\!\!\right]
\;+\; \delta_\alpha\,
C\!\!\left[\!\!\begin{array}{c}\alpha\\\beta\end{array}\!\!\right]
Z\!\!\left[\!\!\begin{array}{c}\alpha\\\gamma+\beta\end{array}\!\!\right]
\right).
\eeqn
Since $Z\!\!\left[\!\!\begin{array}{c}\alpha\\\gamma\end{array}\!\!\right]$
and
$Z\!\!\left[\!\!\begin{array}{c}\alpha\\\gamma+\beta\end{array}\!\!\right]$
differ by an insertion of $(-1)^{F_j}$ in the trace $\tr[{\m H}_\alpha]$
in each component $j$ where $\beta_j=1$, 
this amounts to projecting onto states 
$|\sigma\rangle_\alpha\in
{\m H}_\alpha$ which obey
\beq{gso}
\delta_\alpha\,
C\!\!\left[\!\!\begin{array}{c}\alpha\\\beta\end{array}\!\!\right]
e^{\pi i \beta\cdot F} |\sigma\rangle_\alpha = |\sigma\rangle_\alpha.
\eeq
The condition \req{gso} is often called GSO-projection. Since 
\req{trafo} also implies that 
$$
\fa\alpha\in{\m F}\colon\quad
C\!\!\left[\!\!\begin{array}{c}\alpha\\0\end{array}\!\!\right]
=\delta_\alpha,
$$
so that \req{gso} trivially holds for $\beta=0$, and since
${\m F}$
is a vector space over $\F_2$, we see that $|\sigma\rangle_\alpha$
remains in the spectrum of our theory iff \req{gso} holds for all
basis elements $\beta=b_j\in B$.

The interpretation of $P^{\m F}$ as projection operator allows us to
relate different choices of bases $B,\, B^\prime$ by orbifolding. 
Namely, if $B\subset B^\prime$ with $r=|B|$ and
$r^\prime=|B^\prime|$, then the corresponding theories are related by
orbifolding with respect to a group $G$ of type $(\Z_2)^{r^\prime-r}$,
as long as the coefficients 
$C\!\!\left[\!\!\begin{array}{c}\alpha\\\beta\end{array}\!\!\right]$ for
$\alpha,\beta\in\span_{\F_2}B$ agree. 
To see this, let
$$
B=\{b_1,\ldots,b_r\} \mb{ and }
B^\prime=\{b_1,\ldots,b_r,b_{r+1},\ldots,b_{r^\prime}\}
=B\cup B^\perp.
$$
Define
$$
{\m F}:= \span_{\F_2}B,\quad
{\m F}^\prime:= \span_{\F_2}B^\prime,\quad
{\m F}^\perp:= \span_{\F_2}B^\perp.
$$
Recall that the CFTs ${\m C}, {\m C}^\prime$ corresponding to the bases
$B$ and $B^\prime$, respectively, have underlying pre-Hilbert spaces
$$
{\m H} = P^{\m F}\bigoplus_{\alpha\in{\m F}} {\m H}_\alpha,\quad
{\m H}^\prime = P^{\m F^\prime}\bigoplus_{\alpha^\prime\in{\m F^\prime}} {\m H}_{\alpha^\prime}.
$$
We now wish to reinterpret ${\m H}^\prime$ as arising from $\m H$ by orbifolding,
i.e.\ by rewriting 
$$
{\m H}^\prime = {\m H}^G\oplus \left( {\m H}^{twist} \right)^G,
$$
where a superscript $G$ denotes the $G$-invariant subspace of a given
vectorspace, and 
${\m H}^{twist} := \bigoplus _{\gamma \neq 0} {\m H}_\gamma$ 
is the sum of the twisted sectors. We will see that
this amounts to a simple reordering of the summands of ${\m H}^\prime$. 
Indeed,
the sectors $\m H_{\alpha^\prime},\, \alpha^\prime\in\m F^\prime$, which contribute to
the theory $\m C^\prime$ associated to $B^\prime$ can be listed as
follows:
$$
\fa \gamma\in\m F^\perp:\quad
\wt{\m H}_\gamma^{orb} := \bigoplus_{\wt\alpha\in\m F} \m H_{\wt\alpha+\gamma},
\quad\mb{so }
{\m H}^\prime = P^{\m F^\prime}\bigoplus_{\gamma\in\m F^\perp} \wt{\m H}_{\gamma}^{orb}.
$$
Using $P^{\m F^\prime}=P^{\m F^\perp}\circ P^{\m F}$ and 
for all $\gamma\in\m F^\perp$:
${\m H}_\gamma^{orb} := P^{\m F}\wt{\m H}_{\gamma}^{orb}$, we observe
${\m H}^\prime = P^{\m F^\perp}\bigoplus_{\gamma\in\m F^\perp} {\m H}_{\gamma}^{orb}$.
It thus remains to argue that $P^{\m F^\perp}{\m H}_{\gamma}^{orb}={\m H}_\gamma^G$
for $\gamma\in{\m F}^\perp$ with ${\m F}^\perp$ acting as orbifolding group
$G\cong(\Z_2)^{r^\prime-r}$ and ${\m H}_\gamma$ the $\gamma$-twisted sector. 
First, one immediately checks ${\m H}^{orb}_0={\m H}$,
the original pre-Hilbert space of the theory $\m C$ associated to
$B$, as necessary. On each 
$\m H_\gamma^{orb}$ we have an action of a group
$G\cong(\Z_2)^{r^\prime-r}$ which is generated by $g_{b_j}$ with $b_j\in
B^\perp$, where 
\beq{orb}
\mb{for } |\sigma\rangle_\alpha\in\m H_\alpha\subset\m H_\gamma^{orb}:
\quad\quad
g_{b_j} |\sigma\rangle_\alpha
:= \delta_{\alpha}\,
C\!\!\left[\!\!\begin{array}{c}\alpha\\b_j\end{array}\!\!\right]
e^{\pi i b_j\cdot F} |\sigma\rangle_\alpha .
\eeq
Invariance under $G$ then is
equivalent to $|\sigma\rangle$ obeying \req{gso} for all $\beta\in\m
F^\perp$. 
Hence $P^{\m F^\perp}$ is the projection onto $G$-invariant states, as
claimed. One also checks that ${\m H}_\gamma^{orb}$ is indeed a 
$\gamma$-twisted representation of the OPE of $\m H$.

To obtain a well-defined $G$ orbifold CFT of $\m C$
we cannot allow arbitrary $G$ actions as in \req{orb}. Namely, 
$G$ must obey the
so-called level matching conditions \cite{va86}. These conditions have
been translated into the language of free fermion constructions in
\cite{muwi86}. Note that for even $n$, in \cite[(7)]{muwi86} the first
condition gives a necessary and sufficient condition for level
matching \cite[(5)]{muwi86} to hold. Since there, external fermions
$\psi^\mu,\, \mu\in\{0,1\}$, are never twisted, in our notation these
conditions read $\alpha^2\equiv0(4)$ for all
$\alpha\in\m F^\perp$ and hence are equivalent to our condition
\req{levelmatching} on $\m F^\perp$. The remaining conditions
\cite[(7)]{muwi86} are necessary to ensure that $G$ in fact acts as
$(\Z_2)^{r^\prime-r}$ on $\m H_0^{orb}$. In our language they
guarantee that also $2\alpha\cdot\beta\equiv0(4)$ for all 
$\alpha\in\m F,\, \beta\in\m F^\perp$, i.e.\ that all 
$\gamma\in\m F\oplus\m F^\perp$ obey $\gamma^2\equiv0(4)$.

It should be kept in mind that the orbifoldings for free fermion models 
$\m C$ described above in any given interpretation of  $\m C$
as a nonlinear sigma model on some Calabi-Yau variety
do not necessarily translate into geometric orbifoldings of that variety.
\subsection{Supersymmetry}\label{SUSY}
Consider a free fermion model specified by a choice of $\m F$ and of
coefficients $C$. Assume that for every $\beta\in{\m F}$
we have $\beta_1=\cdots=\beta_8$. We claim that the  
theory is automatically spacetime supersymmetric, i.e.
$Z(\tau,\qu\tau)\equiv0$, if $s\in{\m F}$, and if
$$
\chi\!\!\left[\!\!\begin{array}{c}s\\\beta\end{array}\!\!\right]
=\delta_\beta,\quad\mb{i.e.}\quad
C\!\!\left[\!\!\begin{array}{c}s\\\beta\end{array}\!\!\right]
= -1 \quad\mbox{ for all }\quad \beta\in{\m F},
$$
amounting to $\eps^\prime_i=1$ in the table below \req{examples}.
In terms of the partition function this can be seen as
follows: By assumption,
${\m F}={\m F}^0\cup{\m F}^1$ where $\delta_\beta=(-1)^b$ for
$\beta\in{\m F}^b$, and $\beta\in{\m F}^0\Leftrightarrow
\beta+s\in{\m F}^1$. Moreover, 
$$
\fa\alpha\in{\m F}^a,\,\beta\in{\m F}^b\colon\quad
Z\!\!\left[\!\!\begin{array}{c}\alpha\\\beta\end{array}\!\!\right]\!\!(\tau,\qu\tau)
= Z\!\!\left[\!\!\begin{array}{c}a\\b\end{array}\!\!\right]^8\!\!\!(\tau)\cdot
\prod_{j=9}^{20} 
Z\!\!\left[\!\!\begin{array}{c}\alpha_j\\\beta_j\end{array}\!\!\right]\!\!(\tau)
\prod_{j=21}^{64} Z\!\!\left[\!\!\begin{array}{c}\alpha_j\\\beta_j\end{array}\!\!\right]\!\!(\qu\tau).
$$
Hence by \req{trafo}
\beqn{susy}
Z(\tau,\qu\tau)&=&{\textstyle{1\over|{\m F}|}} \sum_{\alpha,\beta\in{\m F}^0}
\left\{ 
C\!\!\left[\!\!\begin{array}{c}\alpha\\\beta\end{array}\!\!\right]
Z\!\!\left[\!\!\begin{array}{c}\alpha\\\beta\end{array}\!\!\right]\!\!(\tau,\qu\tau)
+
C\!\!\left[\!\!\begin{array}{c}\alpha\\\beta+s\end{array}\!\!\right]
Z\!\!\left[\!\!\begin{array}{c}\alpha\\\beta+s\end{array}\!\!\right]\!\!(\tau,\qu\tau)\right.\nonumber\\
&&\hphantom{{1\over|{\m F}|} \sum_{\alpha,\beta\in{\m F}^0}}
\left.+\,
C\!\!\left[\!\!\begin{array}{c}\alpha+s\\\beta\end{array}\!\!\right]
Z\!\!\left[\!\!\begin{array}{c}\alpha+s\\\beta\end{array}\!\!\right]\!\!(\tau,\qu\tau)
+
C\!\!\left[\!\!\begin{array}{c}\alpha+s\\\beta+s\end{array}\!\!\right]
Z\!\!\left[\!\!\begin{array}{c}\alpha+s\\\beta+s\end{array}\!\!\right]\!\!(\tau,\qu\tau)
\right\} \nonumber\\
&=&{\textstyle{1\over|{\m F}|}} \sum_{\alpha,\beta\in{\m F}^0}
C\!\!\left[\!\!\begin{array}{c}\alpha\\\beta\end{array}\!\!\right]
\left\{ 
Z\!\!\left[\!\!\begin{array}{c}\alpha\\\beta\end{array}\!\!\right]\!\!(\tau,\qu\tau)
+
\delta_\alpha
C\!\!\left[\!\!\begin{array}{c}\alpha\\s\end{array}\!\!\right]
Z\!\!\left[\!\!\begin{array}{c}\alpha\\\beta+s\end{array}\!\!\right]\!\!(\tau,\qu\tau)\right.\nonumber\\
&&\hphantom{{1\over|{\m F}|} \sum_{\alpha,\beta\in{\m F}^0}}
\left.+\,
\delta_\beta
C\!\!\left[\!\!\begin{array}{c}s\\\beta\end{array}\!\!\right]
Z\!\!\left[\!\!\begin{array}{c}\alpha+s\\\beta\end{array}\!\!\right]\!\!(\tau,\qu\tau)
+\delta_\beta
C\!\!\left[\!\!\begin{array}{c}s\\\beta\end{array}\!\!\right]
Z\!\!\left[\!\!\begin{array}{c}\alpha+s\\\beta+s\end{array}\!\!\right]\!\!(\tau,\qu\tau)
\right\} \nonumber\\
&=&{\textstyle{1\over|{\m F}|}} \sum_{\alpha,\beta\in{\m F}^0}
C\!\!\left[\!\!\begin{array}{c}\alpha\\\beta\end{array}\!\!\right]
\left\{ 
Z\!\!\left[\!\!\begin{array}{c}0\\0\end{array}\!\!\right]^8\!\!\!(\tau)
-Z\!\!\left[\!\!\begin{array}{c}0\\1\end{array}\!\!\right]^8\!\!\!(\tau)
-Z\!\!\left[\!\!\begin{array}{c}1\\0\end{array}\!\!\right]^8\!\!\!(\tau)
-Z\!\!\left[\!\!\begin{array}{c}1\\1\end{array}\!\!\right]^8\!\!\!(\tau)
\right\} \nonumber\\
&&\hphantom{{1\over|{\m F}|} \sum_{\alpha,\beta\in{\m F}^0}}
\cdot\prod_{j=9}^{20} 
Z\!\!\left[\!\!\begin{array}{c}\alpha_j\\\beta_j\end{array}\!\!\right]\!\!(\tau)
\prod_{j=21}^{64} Z\!\!\left[\!\!\begin{array}{c}\alpha_j\\\beta_j\end{array}\!\!\right]\!\!(\qu\tau)\\
&\stackrel{\req{vanish}}{=}&
0.\nonumber
\eeqn
%
%
%
\subsection{Gauge algebras}\label{gagr}
The gauge algebra of a free fermion model is the Lie algebra
generated by  the zero modes of its gauge bosons. 
The gauge bosons are those massless spacetime
bosons $\Phi$ which generate deformations 
$\delta_\Phi S = \Phi(z,\qu z) dz\wedge d\qu z$ of the action of 
our free fermion
CFTs and which transform appropriately under the action of the ``external''
space-time Lorentz group. In particular this means that we are 
looking for fields $\Phi$ with conformal weights $h=\qu h=1$ which 
transform in the vector representation of the Lorentz group. 
In the literature, fields with $h=\qu h=1$ are often called massless,
since $\delta_\Phi S$ is invariant under infinitesimal conformal 
transformations of $\C$, i.e. $\delta_\Phi S$ 
defines a ``massless deformation'' of the theory. To
preserve the left handed worldsheet supersymmetry of our heterotic
CFT, the field $\Phi$ must also be the top entry of an 
$(N,\qu N)=(1,0)$ supermultiplet. So equivalently
to listing the appropriate massless fields  we can count states with 
$(h,\qu h)=({1\over2},1)$ in our CFT, the lowest components of multiplets
containing a massless field, if we keep in mind that we need to 
apply a left handed worldsheet supersymmetry to obtain the actual 
massless state.

Again, the particular form of our models allows us to count these
states by hand. Namely, by the discussion of Section \ref{modbuild},
each state in our theory belongs to a sector ${\m H}_\alpha$ with
$\alpha\in\im(\iota)\subset\F_2^{64}$, 
and $\alpha_j=1$ iff the $j^{\mbox{\tiny{}th}}$
free fermion belongs to the R-sector. Using \req{gsen} to determine 
the conformal dimensions of the ground states in ${\m H}_\alpha$
we see:
To list all gauge bosons we need to list all states in 
our theory which are obtained from the ground state of ${\m H}_\alpha$
by the action of creation operators, which obey the GSO projection
\req{gso}, and such that
\beq{massless}
{\alpha_L\cdot\alpha_L\over8}+N_L={1\over2},\quad 
{\alpha_R\cdot\alpha_R\over8}+N_R=1.
\eeq
Here $N_L,\,N_R\in{1\over2}\N$ count the energy coming from creation
operators with integer (half integer) contributions from the
$j^{\mbox{\tiny{}th}}$ component iff $\alpha_j=1$ ($\alpha_j=0$), i.e.
iff the $j^{\mbox{\tiny{}th}}$ free fermion is in the R (NS) sector.
$N_R$ can also get contributions from the two right moving 
bosons $\qu\partial\qu X_\mu,\,\mu\in\{0,1\}$, whose creation operators
are integer moded. 
Finally, we have to inspect the contributions coming from the external
fields to find all massless fields that transform in the 
vector representation of the external Lorentz group. 

Let us count the gauge bosons in two of the examples listed in
\req{examples} with the choices of $C$ given there:

For $B_0=\{1\}$, we have ${\m F}=\{0,1\}$, and ${\m H}_1$ does not
contain massless states since the ground state already has 
conformal dimensions $(h,\qu h) = ({10\over8},{22\over8})$, 
i.e. $h>{1\over2}$. In fact, ${\m H}_1$ is spacetime fermionic and as
such cannot contain gauge bosons anyway.
For ${\m H}_0$, the GSO condition \req{gso}
enforces
$$
-e^{\pi i 1\cdot F}|\sigma\rangle_0 = |\sigma\rangle_0,
$$
i.e. the total number of fermionic creation operators must be odd.
Together with $N_L={1\over2},\,N_R=1$ from \req{massless}, since
$\alpha=0$ leaves all free fermions in the NS sector, we must have 
one creation operator on the left and two fermionic or one bosonic 
one on the right handed side. This yields the following
fields:
\beqno
&&\psi^\mu\qu\partial\qu X_\nu,\;
\chi^i\qu\partial\qu X_\mu,
y^i\qu\partial\qu X_\mu,
w^i\qu\partial\qu X_\mu\;\; (i\in\{1,\ldots,6\}),\\
&&\psi^\mu\qu\Phi^i\qu\Phi^j,\;
\chi^i\qu\Phi^j\qu\Phi^k,y^i\qu\Phi^j\qu\Phi^k,w^i\qu\Phi^j\qu\Phi^k\;\;
 (i\in\{1,\ldots,6\},\,j,k\in\{1,\ldots,44\}).
\eeqno
Counting only those fields which transform in the vector
representation of the external Lorentz group, we find the following
gauge bosons:
\beqno
\chi^i\qu\partial\qu X_\mu,
y^i\qu\partial\qu X_\mu,
w^i\qu\partial\qu X_\mu 
&\mbox{giving}&
\fs\fo(3)^6=\fs\fu(2)^6,\\
\psi^\mu\qu\Phi^i\qu\Phi^j
&\mbox{giving}&
\fs\fo(44).
\eeqno
Moreover, $\psi^\mu\qu\partial\qu X_\nu$
also gives a massless field (the graviton), and so do
$\chi^i\qu\Phi^j\qu\Phi^k$, $y^i\qu\Phi^j\qu\Phi^k$, 
$w^i\qu\Phi^j\qu\Phi^k$
(Lorentz scalars in the $(3)_i\times\mbox{ad}_{\fs\fo(44)}$).

For $B_{tor}$ and with $\eps_i^\prime=1$ the theory is spacetime
supersymmetric, as explained in Section \ref{SUSY}. 
To find gauge bosons, it suffices to consider those
${\m H}_\alpha$
with $\delta_\alpha=1$. Proceeding as above we see that in
${\m H}_0$, the GSO projection with $s=\beta$ breaks the gauge group
$\fs\fu(2)^6$ into $\fu(1)^6$, and the GSO projections with $\beta=\xi_k$
break $\fs\fo(44)$ into $\fs\fo(12)\oplus\fs\fo(16)\oplus\fs\fo(16)$:
\beqno
\chi^i\qu\partial\qu X_\mu
&\mbox{giving}&
\fu(1)^6,\\
\psi^\mu\qu\Phi^i\qu\Phi^j\quad(i,j\in\{1,\ldots,12\})
&\mbox{giving}&
\fs\fo(12),\\
\psi^\mu\qu\varphi^i\qu\varphi^j,\;\psi^\mu\qu\phi^i\qu\phi^j
&\mbox{giving}&
\fs\fo(16)\oplus\fs\fo(16).
\eeqno
In each ${\m H}_{\xi_k}$ the condition \req{massless} yields
$N_L={1\over2},\,N_R=0$. Now let $|\pm\rangle_i$ denote the R ground
states associated to $\qu\varphi^i$ or $\qu\phi^i$, respectively. If
$\kappa=1$ in our tabular below \req{examples}, then no additional
gauge bosons arise from $\m H_{\xi_k}$. But if $\kappa=0$, then 
we get $\psi^\mu\otimes_{i=1}^8 |\pm\rangle_i$ with an even or odd
number of $|+\rangle_i$, depending on the $\eps_k$. In other
words, we get
a $2^7=\mathbf{128}$ spinor
representation of $\fs\fo(16)$, built from the same 
$16$ free fermions which 
give the adjoint $\mathbf{120}$
representation of $\fs\fo(16)$ in ${\m H}_0$. Consistency of the spacetime
theory requires that the gauge bosons must transform in the adjoint of the
gauge algebra; indeed, $\mathbf{128}+\mathbf{120}=\mathbf{248}$ gives
$\mbox{ad}_{\fe_8}$. All in all, since no other sector ${\m H}_\alpha$ 
contains massless states, we  
have the gauge algebra
\beqno
\fu(1)^6\oplus\fs\fo(12)\oplus\fe_8\oplus\fe_8
&& \mb{ if } \kappa=0,\\
\fu(1)^6\oplus\fs\fo(12)\oplus\fs\fo(16)\oplus\fs\fo(16)
&& \mb{ if } \kappa=1.
\eeqno
The case $\kappa=0$ gives
precisely the gauge algebra of the toroidally compactified 
heterotic string theory
with enhanced symmetry $SO(12)\times E_8\times E_8$.
\subsection{Examples of free fermion models}\label{srffm}
It is known that one can use free fermion
models to construct toroidal CFTs with enhanced symmetry. In other words,
for a particular choice of $\m F$ and
the coefficients $C$ in the partition function
\req{pf}, a geometric interpretation will be easy to obtain. 
As explained in Section \ref{gsoorb}, 
adding any basis element to a basis $B$ of $\m F$ results in orbifolding
by a group of type $\Z_2$. In CFT, orbifolding by $\Z_2$ can be reversed
by orbifolding with respect to another group of type $\Z_2$. Hence
omitting basis elements in a free fermion model also amounts to orbifolding. 
Thus all free fermion  models
can be interpreted as orbifolds of a toroidal CFT. These may or may not
be geometric orbifoldings in a given geometric interpretation: For example,
purely geometric group actions can never yield the desired spectra of
Faraggi's semi-realistic free fermion 
models \cite{fny90,fa92}, 
an observation made in  \cite{dofa04}. This was checked in  \cite{dofa04} for the
particular class of geometric orbifolds considered there, and extended to
all geometric orbifolds in our Section \ref{classification}.

In this subsection we will derive geometric interpretations for several
important examples of free fermion models. In particular, we will see
that there exists a free fermion model with geometric interpretation on
the product of three elliptic curves. The corresponding B-field is 
nontrivial, but it is compatible with all geometric orbifoldings 
classified in Section \ref{classification}. In other words, all the 
corresponding geometric orbifoldings can be lifted to the level of CFT, and for 
each of the resulting moduli spaces of orbifold CFTs, there are special points 
giving models which allow a free fermion construction. This also holds for
orbifolds with discrete torsion.
\subsubsection{Free fermion model on the $SO(12)$ torus}
It was already
noted in \cite{muwi86} that the toroidally compactified heterotic string 
with enhanced symmetry $SO(12)\times E_8\times E_8$ can be described using a
free fermion formulation. Let us briefly give the argument in our
language:

First, we identify the partition function of the free fermion model
with basis $B_{tor}$. By what was said in Sections \ref{gsoorb} and
\ref{gagr}, we must choose $\eps^\prime_1=\eps^\prime_2=1$ and
$\kappa=0$ in the table below \req{examples} in order to reproduce the
correct gauge algebra. We introduce $\xi_3:=1+s+\xi_1+\xi_2$ and by
\req{susy} and \req{theta} find
\beqno
Z_{SUSY}(\tau,\qu\tau)
&:=& \inv{2} \left\{ \left( {\theta_3(\tau)\over\eta(\tau)} \right)^4
- \left( {\theta_2(\tau)\over\eta(\tau)} \right)^4
- \left( {\theta_4(\tau)\over\eta(\tau)} \right)^4
- \left( {\theta_1(\tau)\over\eta(\tau)} \right)^4
\right\} \equiv 0,\\
Z(\tau,\qu\tau)
&=& Z_{SUSY}(\tau,\qu\tau)\cdot Z_{Narain}(\tau,\qu\tau),\\
Z_{Narain}(\tau,\qu\tau)
&=& \inv{8} \sum_{\alpha,\beta\in\span_{\F_2}(\xi_1,\xi_2,\xi_3)}
C\!\!\left[\!\!\begin{array}{c}\alpha\\\beta\end{array}\!\!\right]
\prod_{j=9}^{20} 
Z\!\!\left[\!\!\begin{array}{c}\alpha_j\\\beta_j\end{array}\!\!\right]\!\!(\tau)
\prod_{j=21}^{64}
Z\!\!\left[\!\!\begin{array}{c}\alpha_j\\\beta_j\end{array}\!\!\right]\!\!(\qu\tau),
\eeqno
where \req{trafo} allows us to calculate the relevant coefficients $C$, in
particular
$$
\begin{array}[b]{|l||c|c|c|c|}
\hline\vphantom{\sum_{i\over2}^{k\over2}}
\;\;\beta\!\!\!\!&0&\xi_1&\xi_2&\xi_3\\[-0.2em]
\alpha&&&&\\
\hline\hline\vphantom{\sum_{i\over2}^{k\over2}}
0&1&1&1&1\\\hline\vphantom{\sum_{i\over2}^{k\over2}}
\xi_1&1&(-1)^{\eps_1+1}&1&1\\
\hline\vphantom{\sum_{i\over2}^{k\over2}}
\xi_2&1&1&(-1)^{\eps_2+1}&1\\
\hline\vphantom{\sum_{i\over2}^{k\over2}}
\xi_3&1&1&1&(-1)^{\eps_2+\eps_2}\\
\hline
\end{array}\quad\quad
\eps_i\in\{0,1\}.
$$
Note that for all $\alpha,\,\beta\in\span_{\F_2}(\xi_1,\xi_2)$:
$$
C\!\!\left[\!\!\begin{array}{c}\alpha\\\beta+\xi_3\end{array}\!\!\right]
\reeq{trafo}C\!\!\left[\!\!\begin{array}{c}\alpha\\\beta\end{array}\!\!\right]
C\!\!\left[\!\!\begin{array}{c}\alpha\\\xi_3\end{array}\!\!\right]
=C\!\!\left[\!\!\begin{array}{c}\alpha\\\beta\end{array}\!\!\right],
$$
and similarly for all $\alpha,\,\beta\in\{0,\xi_1\}$:
$$
C\!\!\left[\!\!\begin{array}{c}\alpha\\\beta+\xi_2\end{array}\!\!\right]
\reeq{trafo}C\!\!\left[\!\!\begin{array}{c}\alpha\\\beta\end{array}\!\!\right]
C\!\!\left[\!\!\begin{array}{c}\alpha\\\xi_2\end{array}\!\!\right]
=C\!\!\left[\!\!\begin{array}{c}\alpha\\\beta\end{array}\!\!\right].
$$
Together with
$Z\!\!\left[\!\!\begin{array}{c}1\\1\end{array}\!\!\right]\!\!(\qu\tau)
\equiv0$ this implies that a calculation similar to the one performed
in \req{susy} allows to decompose $Z_{Narain}$ as follows:
$$
Z_{Narain}(\tau,\qu\tau)
=\inv2 \sum_{i=1}^4 \left| {\theta_i(\tau)\over\eta(\tau)}\right|^{12}
\cdot \left( \inv2\sum_{i=1}^4 \left(
      {\theta_i(\qu\tau)\over\eta(\qu\tau)}\right)^8
\right)^2.
$$
Since $\inv2\sum_i\theta_i^8$ is the unique modular form of weight $8$
and constant coefficient $1$, it agrees with the theta function $E_4$
of the $E_8$ lattice, and 
\beqn{so(12)part}
Z_{Narain}(\tau,\qu\tau)
&=& Z_{SO(12)}(\tau,\qu\tau)\cdot\left(Z_{E_8}(\qu\tau)\right)^2,\nonumber\\
Z_{SO(12)}(\tau,\qu\tau)
&=&\displaystyle{1\over2} \sum_{i=1}^4 \left|
{\theta_i(\tau)\over\eta(\tau)}\right|^{12}, 
\quad\quad\quad
Z_{E_8}(\qu\tau)
= {E_4(\qu\tau)\over\eta^8(\qu\tau)}.
\eeqn
This was to be expected, since $Z_{Narain}$ is the partition function of
the free fermion model constructed from $12$ left moving fermions
$y^i,w^i,\, i\in\{1,\ldots,6\}$ and $44$ right moving fermions
$\qu y^i,\qu w^i,\, i\in\{1,\ldots,6\},\, \qu\varphi^i,\,\qu\phi^i, \,
i\in\{1,\ldots,16\}$, with spin structures coupled among the 
$y^i,\,w^i,\,\qu y^i,\,\qu w^i$, among the $\qu\varphi^i$, and among the
$\qu\phi^i$. In fact, since we find a $\fu(1)^6_L\oplus\fu(1)^{22}_R$ current
algebra generated by $\nop{y^i w^i},\,\nop{\qu y^i\qu w^i},\, 
\nop{\qu\Phi^{2j-1}\qu\Phi^{2j}}$ ($j>6$), 
this is a toroidal CFT, and the determination of its 
charge lattice will suffice to specify the theory. By the above we already know
that we have an $\fe_8\oplus\fe_8$ gauge symmetry, and thus a
geometric interpretation in terms of a toroidal theory with trivial 
$\fe_8\oplus\fe_8$ bundle on some torus.

It remains to be shown that $Z_{SO(12)}$ is the partition function of the
toroidal CFT at $c=\qu c=6$ with enhanced $SO(12)$ symmetry.
To this end first note that according to the formulas given in 
Appendix \ref{thetafctns},
\beqno
\inv2\left( \theta_3^6(\tau)\qu\theta_3^6(\qu\tau)
+ \theta_4^6(\tau)\qu\theta_4^6(\qu\tau) \right)
&=&
\sum_{\stackrel{\scriptstyle x,y\in\Z^6,}{x-y\in D_6}} 
q^{x^2\over2}\qu q^{y^2\over2},\\
\inv2\left( \theta_2^6(\tau)\qu\theta_2^6(\qu\tau)
+ \theta_1^6(\tau)\qu\theta_1^6(\qu\tau) \right)
&=&
\sum_{\stackrel{\scriptstyle x,y\in\Z^6+{1\over2},}{x-y\in D_6}} 
q^{x^2\over2}\qu q^{y^2\over2},
\eeqno
where we have introduced the root lattice $D_6:=\left\{n\in\Z^6\mid
\sum n_i\equiv0(2)\right\}$ of $SO(12)$, and
${1\over2}\in\left({1\over2}\Z\right)^6$ denotes the vector with all entries
given by ${1\over2}$. Using
$D_6^*=\Z^6\cup\left(\Z^6+{1\over2}\right)$, we find 
$$
Z_{SO(12)}(\tau,\qu\tau)
= {1\over|\eta(\tau)|^{12}}
\sum_{\stackrel{\scriptstyle x,y\in D_6^*,}{x-y\in D_6}} 
q^{x^2\over2}\qu q^{y^2\over2}
= {1\over|\eta(\tau)|^{12}}
\sum_{(x,y)\in \Gamma} 
q^{x^2\over2}\qu q^{y^2\over2}
$$
with
\beq{d6}
\Gamma:=\left\{ (x,y)\in\R^{6,6} \mid x,y\in D_6^*,\, x-y\in
D_6\right\}.
\eeq

The claim now is that $\Gamma$
can be brought into the standard Narain form
\beq{chargegeo}
\Gamma(\Lambda,B)
= \left\{ (p_L,p_R)=\inv{\sqrt2}\left( \mu-B\lambda+\lambda,
\mu-B\lambda-\lambda\right) \mid \lambda\in\Lambda,
\mu\in\Lambda^*\right\}
\eeq
for an appropriate lattice $\Lambda\subset\R^6$ with dual 
$\Lambda^*\subset(\R^6)^*$ (using the standard Euclidean scalar
product on $\R^6$ to view $\Lambda^*\subset\R^6\cong(\R^6)^*$), and for
an appropriate B-field
$B:\Lambda\otimes\R\longrightarrow\Lambda^*\otimes\R$. If for a toroidal
CFT $\m C$ with central charges $c=\qu c=d$
the charge lattice $\Gamma$ can be brought into the form
$\Gamma=\Gamma(\Lambda,B)$ with such $\Lambda,\, B$, then $(\Lambda, B)$
gives a geometric interpretation of $\m C$: The CFT $\m C$ is the nonlinear sigma model
on $\R^d/\Lambda$ with B-field $B$.
Note that any
two B-fields $B,\,B^\prime=B+\delta B$  yield $\Gamma(\Lambda,B)
=\Gamma(\Lambda^\prime,B^\prime)$ iff $\delta B(\Lambda)\subset\Lambda^*$.
For given $\Lambda$ we say that $B,\,B^\prime$ 
are equivalent iff they define the same CFT, i.e.\ iff $\Gamma(\Lambda,B)
=\Gamma(\Lambda^\prime,B^\prime)$.

>From \req{d6} and \req{chargegeo}
we directly read off $\Lambda=\inv{\sqrt2}D_6$, $\Lambda^*=\sqrt2
D_6^*$, such that
$\Lambda^*\subset\Lambda\subset\inv2\Lambda^*$. Since $D_6\subset
D_6^*$, \req{d6} tells us that $\Gamma(\Lambda,B)$ contains all
vectors of type $(p_L,p_R)=(x,0)$ and $(p_L,p_R)=(0,y)$ with $x,\,y\in
D_6$. In other words, $(B-1)\Lambda\subset\Lambda^*$ (or equivalently 
$(B-1)D_6\subset2D_6^*$), which is equivalent to
$(B+1)\Lambda\subset\Lambda^*$ since
$\Lambda\subset\inv2\Lambda^*$. In fact,
$(B-1)D_6\subset2D_6^*$ holds iff all off-diagonal entries of $B$ are
odd, and all such choices of $B$ are equivalent. Without loss of
generality we can therefore take $B=B_\ast$ with
\beq{2torB}
B_\ast=\left(\begin{array}{cccccc}
0&1&1&1&1&1\\
-1&0&1&1&1&1\\
-1&-1&0&1&1&1\\
-1&-1&-1&0&1&1\\
-1&-1&-1&-1&0&1\\
-1&-1&-1&-1&-1&0
\end{array}\right).
\eeq
To show that the free fermion model with basis $B_{tor}$
and $\eps_k^\prime=1,\,\kappa=0$ agrees with the Narain model as
claimed, for instance by using \cite[Thm.\ 3.1]{nawe00},
we still need to identify their W-algebras and charge lattices 
with respect to $\fu(1)^6_L\oplus\fu(1)^6_R$. Since the theory is
left-right symmetric, we can focus on the left-handed degrees of 
freedom. With
\beqno
j_k&:=&i\nop{y^kw^k},\quad k\in\{1,\ldots,6\},\\
x^k&:=&\inv{\sqrt2}\left(y^k+i w^k\right),\quad\quad
(x^k)^*:=\inv{\sqrt2}\left(y^k-i w^k\right),\quad
 k\in\{1,\ldots,6\},
\eeqno
in addition to the $j_k$ which generate $\fu(1)^6$
we find $60$ further $(1,0)$ fields in the free fermion model:
$$
i\nop{x^k x^l}=-i\nop{x^l x^k},\quad\quad
i\nop{x^k(x^l)^*},\quad\quad
i\nop{(x^k)^*(x^l)^*}=-i\nop{(x^l)^*(x^k)^*}\quad\quad (k\neq l),
$$
with charges with respect to $\vec\jmath=(j_1,\ldots,j_6)$ given by
$$
e_k+e_l,\quad
e_k-e_l,\quad
-e_k-e_l,
$$
where the $e_i$ denote the standard basis vectors in 
$\R^6$. Hence we can identify these $(1,0)$ fields with the holomorphic
vertex operators $V_{(p_L,0)}$ of the respective charges $(p_L,0)$.
One checks that this identification is compatible with the OPE, i.e.\
the free fermion model and our toroidal CFT share the same 
W-algebra, with zero mode algebra of the generators given by $\fs\fo(12)$.
By a similar analysis one identifies all $V_{(p_L,p_R)}$ for
$(p_L,p_R)\in\Gamma_{Narain}$ with fields in the free fermion model:
Since we have already dealt with the $(1,0)$ fields, and since in our model
$(B\pm1)\Lambda\subset\Lambda^*$, it suffices to identify the
$V_{(p_L,p_R)}$ with $p_L=p_R=\inv{\sqrt2}\mu,\,\mu\in\Lambda^*$. 
Now
$$
i\nop{x^k\qu x^l}\longmapsto V_{(e_k,e_l)}\quad
\mb{ and }\quad
\prod_{i=1}^6 |\delta_i\rangle_i\qu{|\delta_i\rangle_i}
\longmapsto V_{({1\over2}\sum_i\delta_i e_i,{1\over2}\sum_i\delta_i e_i)},
\;\;\delta_i\in\{\pm\}
$$
gives the desired identification.
\subsubsection{Free fermion model on the square torus}\label{squaretorus}
%
In the previous subsection, we have argued that a free fermion model 
with basis $B_{tor}$ for $\m F$ 
yields a conformal field theory with geometric interpretation
on the torus $\R^6/\Lambda$ with $\Lambda = {1\over\sqrt2} D_6$. The lattice 
$\Lambda^\prime=\sqrt2\Z^6$ is a sublattice of $\Lambda$ of index $2^5$. Correspondingly,
for the dual lattices we find that $(\Lambda^\prime)^\ast$ is 
generated by $\Lambda^\ast$ and the multiples ${1\over\sqrt2} e_i$ of
the first five standard basis vectors $e_i,\, i\in\{1,\ldots,5\}$. 
This is evidence for the fact that there
is also a free fermion model with geometric interpretation on the square torus
$T^6=\R^6/{\sqrt2}\Z^6$: It should arise by orbifolding with respect to a group of type  
$(\Z_2)^5$ from the toroidal free fermion model on the $SO(12)$ torus.

Indeed, with the same techniques as in the previous section, one shows:
Consider the free fermion model with basis 
$B_\Box=\{s,\zeta_1,\ldots,\zeta_5,\xi_1,\xi_2,\xi_3\}$,
where $\xi_3=1+s+\xi_1+\xi_2$ as before, and
$\zeta_i$ with $i\in\{1,\ldots,5\}$ is the vector which has an entry $1$ corresponding to
the fermions $y^i,\,w^i,\,\qu y^i,\,\qu w^i$ and entries $0$ otherwise. For the coefficients $C$, 
for all $\beta\in B_\Box$, we set
$C\!\!\left[\!\!\begin{array}{c} s\\\beta\end{array}\!\!\right]:=-1$,
while for $\alpha,\beta\in B_\Box-\{ s\}$, we set
$C\!\!\left[\!\!\begin{array}{c}\alpha\\\beta\end{array}\!\!\right]:=1$.
The resulting free fermion model has geometric interpretation on the square torus
$T^6=\R^6/{\sqrt2}\Z^6$ with the same B-field $B_\ast$ as for the previous
toroidal model, c.f.\ \req{2torB}. 

This is an important observation with respect to our classification in 
Section \ref{classification}. It implies that for all the orbifolds $X/G$ 
given there, $X \cong T^6$ with the complex
structure of a product $E_1 \times E_2 \times E_3$ of three elliptic curves,
a free fermion model exists which has geometric interpretation on $X/G$, provided
that the action of $G$ is compatible with the B-field $B_\ast$ given in \req{2torB}. 
Compatibility here means that for every $g\in G$, the $g$-conjugate B-field is
equivalent to $B_\ast$, which indeed is the case for all groups $G$
discussed in Section \ref{classification}.

Note that the above orbifolding by
$(\Z_2)^5$  is not described in terms of a geometric orbifolding: 
While a geometric orbifolding would have to lead to a model with geometric
orbifold interpretation on a quotient of $\R^6/\Lambda$, 
the geometric interpretation $(\Lambda^\prime,B)=(\sqrt2\Z^6, B_\ast)$
of the orbifold CFT yields an unbranched cover 
$T^6=\R^6/\sqrt2\Z^6$ of the geometric interpretation 
$(\Lambda,B)=({1\over\sqrt2}D_6, B_\ast)$ of the original theory on $\R^6/{1\over\sqrt2}D_6$. 
The reverse of this orbifolding,
obtained in the free fermion language by omitting the basis vectors 
$\zeta_i,\,i\in\{1,\ldots,5\}$ from the basis $B_\Box$, is a geometric orbifolding
of type $(\Z_2)^5$ by shifts, and it 
yields $\R^6/{1\over\sqrt2}D_6 = T^6/(\Z_2)^5$.
\subsubsection{The NAHE model}\label{NAHE}
As an example of orbifolding by a group which does 
not act as shift orbifold on the torus, we consider the free fermion 
model with basis 
$B_{NAHE^+}=\{1,s,\xi_1,\xi_2,g_1,g_2\}$.  
This is the geometric part of what Faraggi calls 
the extended NAHE set \cite{fgkp87,inq87,aehn89,fny90,fa92,fana93}, and in
the notation of \cite{dofa04} one has
$g_k=b_k+s+\xi_2$  and $g_3=g_1+g_2+1+\xi_1$. The $8$ Dirac
fermions $\qu\varphi^i$ are renamed into 
$\qu\psi^1,\ldots,\qu\psi^5,\,\qu\eta^1,\,\qu\eta^2,\,\qu\eta^3$.
Omitting untwisted fermions, for the additional basis vectors we set
\beq{gkaction}
\begin{array}[b]{|l||c|c|c||c|c||c|c||c|c||}
\hline
&\chi^1,\,\chi^2,&\chi^3,\,\chi^4&\chi^5,\,\chi^6
&y^1,\,y^2,&w^1,\,w^2,
&y^3,\,y^4,&w^3,\,w^4,
&y^5,\,y^6,&w^5,\,w^6,
\\
&\qu\eta^1&\qu\eta^2&\qu\eta^3
&\qu y^1,\,\qu y^2&\qu w^1,\,\qu w^2
&\qu y^3,\,\qu y^4&\qu w^3,\,\qu w^4
&\qu y^5,\,\qu y^6&\qu w^5,\,\qu w^6
\vphantom{\sum_{i\over2}}
\\[-0.2em]
\hline\hline\vphantom{\sum_{i\over2}^{k\over2}}
g_1&0&1&1&
0&0&1&0&1&0
\\\hline\vphantom{\sum_{i\over2}^{k\over2}}
g_2&1&0&1&
1&0&0&0&0&1
\\\hline\vphantom{\sum_{i\over2}^{k\over2}}
g_3&1&1&0&
0&1&0&1&0&0
\\\hline
\end{array}.
\eeq
The geometric action on the $SO(12)$ torus model with pre-Hilbert
space
$\m H_{SO(12)}$ built on the $y^i,\,w^i;\,\qu y^i,\,\qu w^i$
is left-right symmetric.
Translating  into the fundamental fields of the toroidal theory
we get
$$
\begin{array}{lllrcrl}
g_1:&j_k\mapsto-j_k&\mb{for }\;\;k\in\{3,\ldots,6\},
&x^k&\longleftrightarrow&-(x^k)^*
&\mb{for}\;\;k\in\{3,\ldots,6\},\\[0.5em]
g_2:&j_k\mapsto-j_k&\mb{for }\;\;k\in\{1,2,5,6\},
&x^k&\longleftrightarrow&-(x^k)^*
&\mb{for}\;\;k\in\{1,2\},\\[0.5em]
&&&x^k&\longleftrightarrow&(x^k)^*
&\mb{for}\;\;k\in\{5,6\},\\[0.5em]
g_3:&j_k\mapsto-j_k&\mb{for }\;\;k\in\{1,\ldots,4\},
&x^k&\longleftrightarrow&(x^k)^*
&\mb{for}\;\;k\in\{1,\ldots,4\},
\end{array}
$$
and analogously for the right-handed fields.
In geometric language with real coordinates $v_1,\ldots,v_6$ this
corresponds to 
\beq{gkgeoaction}
\left.
\begin{array}{rcl}
\displaystyle 
g_1:\quad v_k\mapsto-v_k\>\>\mb{for }\;\;k\in\{3,\ldots,6\},
\mb{ up to a shift on the charge}\e
\>\>\mb{ lattice by}
\;\;\delta=\inv2\left(1,1,0,0,0,0;1,1,0,0,0,0\right),\e
g_2:\quad v_k\mapsto-v_k\>\>\mb{for }\;\;k\in\{1,2,5,6\},
\mb{ up to a shift on the charge}\e
\>\>\mb{ lattice by}
\;\;\delta=\inv2\left(1,1,0,0,0,0;1,1,0,0,0,0\right),\e
g_3:\quad v_k\mapsto-v_k\>\>\mb{for }\;\;k\in\{1,\ldots,4\},
\mb{ i.e. } g_3=g_1\circ g_2.
\end{array}
\right\}
\eeq
The claim is that the shifts involved in $g_1,\,g_2$ can be
ignored, i.e.\ that $g_1,\,g_2,\,g_3$ act geometrically
as the three non-trivial
elements of the Kleinian $(\Z_2)^2$ twist group $T_0$.

To see this, let us assume that the $g_k$ act as claimed in the geometric
interpretation and derive \req{gkaction} from this assumption. By the
above, we only need to confirm the choices between placing the $1$'s
in the $\{y^k\}$ instead of the $\{w^k\}$ columns in \req{gkaction}
for $g_1,\,g_2$. 
First note that for the construction of the untwisted sector of the
orbifold these choices are irrelevant. Namely, the additional sign in 
$x^k\leftrightarrow-(x^k)^*$ merely results in a choice of, say, 
$\nop{x^1x^3}-\nop{x^1(x^3)^*}$ instead of
$\nop{x^1x^3}+\nop{x^1(x^3)^*}$ as invariant field under $g_1$, with
no consequence on the OPE. In accord with this, all the contributions
to the partition function
\beq{frfinv}
\orbox{g_k}{1}(\tau)
=\tr[\m H_{SO(12)}] \left(g_k q^{L_0-6/24}\qu q^{\qu L_0-6/24}\right)
=\inv2\left\{ \left|{\theta_4\over\eta}\right|^4
\left|{\theta_3\over\eta}\right|^8
+\left|{\theta_3\over\eta}\right|^4
\left|{\theta_4\over\eta}\right|^8\right\}
\eeq
for $k\in\{1,2,3\}$ agree,
where the factors raised to the fourth power in each summand come from
the action of the twist on four of the real fermions.
Similarly the traces over the full 
$g_k$ twisted sectors of the orbifold are
\beq{frfpart}
\orbox{1}{g_k}(\tau)=\orbox{g_k}{1}(-\inv\tau)
=\inv2\left\{ \left|{\theta_2\over\eta}\right|^4
\left|{\theta_3\over\eta}\right|^8
+\left|{\theta_3\over\eta}\right|^4
\left|{\theta_2\over\eta}\right|^8\right\},
\eeq
where it again should be kept in mind that the factors 
raised to the fourth power in each summand come from
the action of the twist on four of the fermions.
The choice of placing the $1$'s in the 
$\{y^k\}$ instead of the $\{w^k\}$ columns in \req{gkaction}
hence only enters into the encoding of the $g_i$ action on the $g_k$
twisted sector with $i\neq k$.

In terms of the geometric interpretation of the toroidal theory 
equivalently to \req{frfinv} and \req{frfpart} we
write
\beqn{twist}
\orbox{g_k}{1}(\tau)
&=&\ds
\left|{2\eta\over\theta_2}\right|^4
\cdot\inv2\left\{ \left|{\theta_3\over\eta}\right|^4
+\left|{\theta_4\over\eta}\right|^4\right\},\nonumber\\
\orbox{1}{g_k}(\tau)
&=&\ds
\left|{2\eta\over\theta_4}\right|^4
\cdot\inv2\left\{ \left|{\theta_3\over\eta}\right|^4
+\left|{\theta_2\over\eta}\right|^4\right\},
\eeqn
where the first factor in each case accounts for the contributions
from the twisted states in four real coordinate directions, whereas
the second factor comes from the trace over states left invariant
by $g_k$. Hence in the usual $(\Z_2)^2$ orbifold, when 
$g_i$ with
$i\neq k$ acts on the $g_k$ twisted sector, it must leave a factor
$\left|{2\eta\over\theta_4}\right|^2$ invariant, and act by the usual $\Z_2$
twist on a second factor $\left|{2\eta\over\theta_4}\right|^2$
transforming it into $\left|{2\eta\over\theta_3}\right|^2$, while 
it introduces the usual factor $\left|{\theta_3\theta_4\over\eta^2}\right|^2
=\left|{2\eta\over\theta_2}\right|^2$ for a twisted sector in the
directions which are left invariant by $g_k$ but not by $g_i$.
All in all we get
\beq{iktwist}
\mb{for }i\neq k:\quad\quad
\orbox{g_i}{g_k}(\tau)
= \left|{2\eta\over\theta_4}\right|^2
\left|{2\eta\over\theta_3}\right|^2
\left|{2\eta\over\theta_2}\right|^2
\reeq{vanish} 2^4.
\eeq
Let us now translate the $g_i$ action on the $g_k$ twisted sector back
into the language of the free fermion model \req{frfpart}. We already
know that in \req{frfpart} a global factor 
$\left|{\theta_2\theta_3\over\eta^2}\right|^2$ must remain invariant,
coming from the directions twisted by $g_k$ but not by $g_i$.
A factor $\left|{\theta_3\over\eta}\right|^4$ in the first
summand is transformed into
$\left|{\theta_3\theta_4\over\eta^2}\right|^2$, and 
a factor $\left|{\theta_2\over\eta}\right|^4$ in the second
summand is transformed into
$\left|{\theta_2\theta_1\over\eta^2}\right|^2=0$,
each coming from directions twisted by $g_i$ but not by $g_k$. 
Since by
\req{iktwist} the final
result of the transformation must be
$
\left|{\theta_2\theta_3\over\eta^2}\right|^2
\cdot\left|{\theta_2\theta_4\over\eta^2}\right|^2
\cdot\left|{\theta_3\theta_4\over\eta^2}\right|^2,
$
we find the remaining factor coming from directions twisted by
both $g_i$ and $g_k$, namely
$\left|{\theta_2\theta_3\over\eta^2}\right|^2$ in the first summand
is transformed into
$\left|{\theta_2\theta_4\over\eta^2}\right|^2$. Hence the twist is
applied to fermions previously yielding a $\theta_3$ contribution, in
other words to fermions which had been untwisted so far. Altogether
this indeed leads to the data listed in \req{gkaction} for the 
$y^i,\,w^i;\,\qu y^i,\,\qu w^i$.

In the construction of a semi-realistic free fermion model 
\cite{fny90,inq87}, the authors also use 
three further $\Z_2$ actions, $\alpha,\,\beta,\,\gamma$, where again
we only list fermions that are in fact twisted:
$$
\begin{array}[b]{|l||c|c||c|c||c|c||c||c||c||}
\hline
&y^1,\,y^5,&
&y^2,\,y^4,&
&y^3,\,y^6,&&&&
\\
&w^1,\,w^5,&\qu y^1,
&w^2,\,w^4&\qu w^2,
&w^3,\,w^6&\qu y^3,&&&
\\
&\qu w^1,\,\qu y^5,&\qu w^5
&\qu y^2,\,\qu y^4&\qu w^4
&\qu w^3,\,\qu w^6&\qu y^6
&\qu\psi^{1,\ldots,5}&\qu\eta^i&\qu\phi^{1,\ldots,8}
\vphantom{\sum_{i\over2}}
\\[-0.2em]
\hline\hline\vphantom{\sum_{i\over2}^{k\over2}}
\alpha&0&1&0&1&1&0
&1\,1\,1\,0\,0&0&1\,1\,1\,1\,0\,0\,0\,0
\\\hline\vphantom{\sum_{i\over2}^{k\over2}}
\beta&1&0&0&1&0&1
&1\,1\,1\,0\,0&0&1\,1\,1\,1\,0\,0\,0\,0
\\\hline\vphantom{\sum_{i\over2}^{k\over2}}
\gamma&0&1&1&0&0&1
&\inv2\,\inv2\,\inv2\,\inv2\,\inv2&0&\inv2\,0\,1\,1\,\inv2\,\inv2\,\inv2\,0
\\\hline
\end{array}.
$$
Note that $\alpha,\,\beta,\,\gamma$ share the property of leaving all
the $j_k$ invariant and multiplying all the $\qu\jmath_k$ by
$-1$. This is hard to interpret geometrically, since the left-right
coupling of the local coordinate functions corresponding to the pairs 
$y^i+\qu y^i, w^i+\qu w^i$ is broken.
The difference in sign between the action on the holomorphic $j_k$ and the
antiholomorphic $\qu\jmath_k$ is reminiscent of
some type of mirror symmmetry.
The actions of $\alpha\beta,\,\beta\gamma,\,\gamma\alpha$
on the $y^i,\,w^i;\,\qu y^i,\,\qu w^i$, however, have a geometric
interpretation:
$$
\begin{array}[b]{|l||c|c|c|}
\hline
&y^1,\,y^5,\,w^1,\,\qu y^1,
&y^2,\,y^4,\,w^2,\,\qu w^2,
&y^3,\,w^3,\,w^6,\,\qu y^3,
\\
&w^5,\,\qu y^5,\,\qu w^1,\,\qu w^5
&\qu y^2,\,\qu y^4,\,w^4,\,\qu w^4
&y^6,\,\qu w^3,\,\qu w^6,\,\qu y^6
\vphantom{\sum_{i\over2}}
\\[-0.2em]
\hline\hline\vphantom{\sum_{i\over2}^{k\over2}}
\alpha\beta&1&0&1\\\hline\vphantom{\sum_{i\over2}^{k\over2}}
\beta\gamma&1&1&0\\\hline\vphantom{\sum_{i\over2}^{k\over2}}
\gamma\alpha&0&1&1\\\hline
\end{array}.
$$
Each of these actions is left-right symmetric and acts trivially on
all $j_k,\,\qu\jmath_k$. As such, they are shift orbifolds, namely by
\beqno
\alpha\beta: && \inv2\left(0,1,0,1,0,0;0,1,0,1,0,0\right),\\
\beta\gamma: && \inv2\left(0,0,1,0,0,1;0,0,1,0,0,1\right),\\
\gamma\alpha:&& \inv2\left(1,0,0,0,1,0;1,0,0,0,1,0\right).
\eeqno
Faraggi shows that his model is a three generation model \cite{fa92},
which is a necessary requirement for a theory to be viewed
as ``semi-realistic''. It is natural to ask whether there exists 
an underlying geometric orbifold with Hodge numbers 
$(h^{1,1}, h^{2,1})$ yielding three generations
$3=h^{1,1}- h^{2,1}$. In \cite{dofa04}, this question was
answered to the negative, however without a complete classification
of all possible orbifolds. Our classfication, summarized in Table 1
of Section \ref{tables}, completes this task, and again answers
the question to the negative. The numbers of generations that
can be produced by purely geometric methods, according to the
results of Section \ref{tables}, are $48,\,24,\,12,\,6$, or $0$.
It is interesting that precisely the number $3$ is lacking in
this list.
\section{Special models within our classification}\label{special}
In this section, we discuss some special cases of the orbifolds that we have classified
in Section \ref{classification}. More precisely, we identify some of the resulting 
Calabi-Yau threefolds as degenerate cases of so-called Borcea-Voisin threefolds and
Schoen threefolds or their orbifolds. All these particular Calabi-Yau threefolds
have been widely discussed in the literature, either in relation to mirror symmetry
or to model building in heterotic string theory. 
Since the results of Section \ref{ffm} in particular
imply that for every Calabi-Yau threefold listed in Section \ref{tables} there exists
a free fermion model of an associated CFT, we  automatically obtain free fermion constructions
for theories associated to certain Borcea-Voisin threefolds, Schoen threefolds, and their orbifolds.
This may eventually yield further insight into the geometry of these threefolds, 
and it may simplify some of the
existing string theory constructions, since free fermion models are constructed using very simple
mathematical tools.
\subsection{The Vafa-Witten and NAHE models}
As was briefly mentioned at the end of our discussion of Table 1 in
Section \ref{tables}, our model
$(0-1)$ agrees with the $(\Z_2)^2$ orbifold which was extensively studied
by Vafa and Witten in their seminal work \cite{vawi95} on discrete torsion
and mirror symmetry. Since the Vafa-Witten model is indeed obtained as
orbifold of the product of three elliptic curves by the group $T_0$ of ordinary
twists, agreement with our model $(0-1)$ is immediate.

Let us now discuss the two models $(1-1)$ and $(2-9)$ in our list, both of which have
Hodge numbers $(27,3)$. They are not equivalent as topological spaces, since
they can be distinguished by their fundamental groups $C=\Z_2$ and $0$, respectively.
However, there seems to have been some confusion between these two models,
which we now wish to lift. 
Clearly, $(1-1)$ is obtained as $\Z_2$-orbifold of the Vafa-Witten
model. On the other hand, we claim that $(2-9)$ agrees with the Calabi-Yau threefold $Y$
which is obtained by orbifolding an $SO(12)$ torus by the orbifolding group $T_0$. 
This follows using the ideas described at the end of Section \ref{squaretorus}:
The $SO(12)$ torus can be obtained from the product $X$ of three elliptic curves 
by a shift orbifold using the group 
$\widetilde G_S:=(\Z_2)^5$ with 
generators
$$
(\tau,0,0), (0,\tau,0), (0,0,\tau), (0,1,1), (1,0,1).
$$
Hence $Y$ is topologically equivalent to $X/\widetilde G$, where 
$\widetilde G=\widetilde G_S\times T_0$. However, the group $\widetilde G$ is 
redundant, since shifts by the first three vectors listed
above are redundant. Hence $Y$ is also topologically equivalent
to $X/G$ with $G$ generated by $T_0$ and the shifts $(0,1,1), (1,0,1)$.
This is precisely our model $(2-9)$. 

It now follows that the free fermion model 
with basis $B_{NAHE^+}$ discussed in Section \ref{NAHE} gives a CFT
with geometric interpretation on our threefold $(2-9)$: In Section \ref{NAHE}
we have described this free fermion model as a $(\Z_2)^2$-orbifold of the toroidal 
model on the $SO(12)$ torus, and \req{gkgeoaction} identifies the
relevant action of $(\Z_2)^2$ with $T_0$. By the above, this gives a geometric
interpretation on $(2-9)$.
It also means that the NAHE free fermion model with basis $B_{NAHE^+}$
does \textsl{not} have a geometric interpretation on the 
$\Z_2$-shift orbifold $(1-1)$ of the Vafa-Witten model, as is sometimes
claimed. Using the techniques described so far, one also checks
that the free fermion model with basis $B_{NAHE^+}\cup\{\alpha\beta,\beta\gamma,
\gamma\alpha\}$ (see Section \ref{NAHE} for notations) has geometric interpretation
on our  Calabi-Yau threefold $(4-1)$ with Hodge numbers $(15,3)$. Faraggi, on the other
hand, constructs a semi-realistic free fermion model with chiral spectrum $(6,3)$ \cite{fny90,fa92}. 
As can be seen from our classification in Section \ref{tables}, there is no geometric
orbifold of the appropriate type with these Hodge numbers.
\subsection{Borcea-Voisin threefolds}\label{BV}
Within our list of orbifolds tabulated in Section \ref{tables}, 
there are several examples of Borcea-Voisin threefolds \cite{bo97,vo93}. 
Namely, let $BV(r,a,\delta)$ denote 
a connected component of the moduli space of Borcea-Voisin 
threefolds obtained by a $\Z_2$ orbifolding procedure from the product
of a $K3$-surface and an elliptic curve, $(K3\times E_3)/(\iota,-1)$,
where $\iota$ acts as antisymplectic automorphism on $K3$. Here, $(r,a,\delta)\in\N^3$ 
are the parameters from Nikulin's classification
of $K3$ surfaces with such automorphisms \cite{ni79}. 
These parameters uniquely specify
the topological invariants of each element in $BV(r,a,\delta)$, and 
there are precisely $75$ possible triples $(r,a,\delta)$. One finds that the
Hodge numbers of the resulting Borcea-Voisin threefolds are
$$
h^{1,1}=5+3r-2a, \quad h^{2,1}=65-3r-2a,
$$
except for $(r,a,\delta)=(10,10,0)$ where $h^{1,1}=h^{2,1}=11$,
see e.g.\ \cite{bo97,vo93}.
A related set of invariants describes the components of the fixed locus of the involution 
$\iota$. In all cases except $(10,10,0)$ and $(10,8,0)$, this set consists of $k+1$ curves, 
one of which has genus $g$ while the others are rational. In the exceptional case $(10,10,0)$
the fixed locus is empty, while in case $(10,8,0)$ it consists of two elliptic curves.
In the remaining cases, these invariants are related to $r,a$ by:
$$
2g = 22-r-a,\quad         2k=r-a.
$$


We claim that seven of the orbifolds listed in Section \ref{tables} are among the 
Borcea-Voisin families of threefolds:
\begin{eqnarray*}
(0-1) &\in BV(18,4,0), & (h^{1,1},h^{2,1})=(51,3),\\
(0-2) &\in BV(10,8,0),  & (h^{1,1},h^{2,1})=(19,19),\\
(0-3),\,(1-10) &\in BV(10,10,0), & (h^{1,1},h^{2,1})=(11,11),\\
(1-6) &\in BV(14,8,1), & (h^{1,1},h^{2,1})=(31,7),\\
(1-8) &\in BV(10,10,1), & (h^{1,1},h^{2,1})=(15,15),\\
(2-13) &\in BV(12,10,1), & (h^{1,1},h^{2,1})=(21,9).
\end{eqnarray*}
In general, any automorphism of a two-dimensional abelian variety that
commutes with the $(-1)$ involution permutes its $16$ fixed points and induces an
isomorphism between the tangent spaces at corresponding points. It hence
lifts to an automorphism of the $K3$ surface obtained by resolving the Kummer 
surface. The
symplectic form must be mapped to some multiple of itself, and that multiple can
be evaluated at any point of the resulting $K3$ surface. We may therefore safely 
ignore the fixed points and work on the torus $E_1 \times E_2 \times E_3$.

We write these seven quotients in the form 
$(E_1 \times K3)/(-1,\iota)$, where $-1$ sends $x \mapsto -x$ while $\iota$ is
induced (as above) from an involution (still denoted $\iota$) of $E_2 \times E_3$. 
The twist part of $\iota$ will always be $(y,z) \mapsto (y,-z)$.

In each case we write:
\begin{itemize}
\item
The group acting on $E_1 \times E_2 \times E_3$ (in a couple of cases we need a
permutation of what we have in Section \ref{classification}).
\item
The subgroup $G_0$ fixing $E_1$ and acting only on $E_2 \times E_3$.
\item
The involution $\iota$ on $E_2 \times E_3$.
\item
The fixed curves of $\iota$ and its composites with $G_0$ in $E_2 \times E_3$ and
their image in the $K3$ surface, i.e.\ mod $G_0$, that is the ramification curve of 
the $K3$ involution.
\item
The invariants $g,k$ when they make sense (i.e.\ except in cases $(10,10,0)$ and $(10,8,0)$,
when the fixed locus is empty or two elliptic curves, respectively), and $(r,a,\delta)$.
\end{itemize}
$$
\begin{array}{|c|c|c|c|l|l|}
\hline
\mb{model}&\mb{group}&G_0&\iota&\mb{Fix}(\iota\cdot G_0)&(k,g),\\
&&&&\quad\to\mb{ramif.\ curve}&
\;\;(r,a,\delta)\\
\hline\hline
(0-1)&&&& 8 \mb{ elliptics:}
&\\
&(0+,0-,0-),&(0-,0-)& (0+,0-)
&\{2y=0\} \cup \{2z=0\}&(0,7),\\
&(0-,0+,0-)& & 
&\quad\to8 \mb{ rationals}&\;\; (18, 4, 0)\\
\hline
(0-2)&(0+,0-,0-),&(0-,0-)&(0+,1-)&4 \mb{ elliptics: }\{2z=1\}&\\
&(0-,0+,1-)& &&\quad\to 2 \mb{ elliptics}&\;\;(10,8,0)\\
\hline
(0-3)&(0+,0-,0-),&&& \mb{empty} &\\ 
&(0-,1+,1-) &(0-,0-)&(1+,1-)&\quad\to \mb{ empty}&\;\;(10,10,0)\\
\hline
(1-6)&(0+,0-,0-),&&& 8 \mb{ elliptics:} &\\ 
&(0-,0+,0-), &(0-,0-),&(0+,0-)&\{2y=0\} \cup \{2z=0\}&(0,3),\\
&(0,t,t) &(t,t)&&\quad\to 4 \mb{ rationals}&\;\;(14, 8, 1)\\
\hline
(1-8)&(0+,0-,1-),&&&  &\\ 
&(0-,0+,0-), &(0-,1-),&(0+,0-)&4 \mb{ elliptics: }\{2z=0\}&(1,0),\\
&(0,t,t) &(t,t)&&\quad\to 1 \mb{ elliptic}&\;\;(10, 10, 1)\\
\hline
(1-10)&(0+,0-,0-),&&&  &\\ 
&(0-,1+,1-), &(0-,0-),&(1+,1-)&\mb{empty}&\\
&(0,t,t) &(t,t)&&\quad\to \mb{empty}&\;\;(10,10,0)\\
\hline
(2-13)&(0+,0-,0-),&&&&\\ 
&(0-,0+,0-),&(0-,0-),&& 8 \mb{ elliptics:} &\\ 
&(0,1,1),  &(1,1),&(0+,0-)&\{2y=0\} \cup \{2z=0\}&(0,1),\\
&(0,t,t) &(t,t)&&\quad\to 2 \mb{ rationals}&\;\;(12, 10, 1)\\
\hline 
\end{array}
$$

The above argument shows that $(1-10)$ is in the same family as $(0-3)$. 
More precisely, these are two distinct three-parameter subfamilies of the 
eleven dimensional family of Borcea-Voisin threefolds of type 
$(10,10,0)$. In each case, the three
parameters arise as the modulus of the elliptic curve plus two moduli for
Kummer-like $K3$ surfaces, but these are two different two-parameter families of
the latter.

As to the determination of the invariants $(r,a,\delta)$,
the above calculations give us the fixed divisor in the orbifolding, hence by
standard formulas also $r$ and $a$. To obtain $\delta$,
in case $(0-1)$ we check explicitly that the
class of the ramification divisor is even, basically because it has even
multiplicity (namely, two) at each of the 16 blown up points. 
It follows that $\delta=0$ in this case.
In all
other cases $\delta$ is uniquely determined, either because only one
possibility occurs in Nikulin's list, or because the fixed divisor is either empty or 
it consists of two
elliptic curves, which means that these yield cases $(10,10,0)$ and $(10,8,0)$,
respectively.

%
It is curious that all the examples of Borcea-Voisin threefolds which
occur in our list 
either have Hodge numbers $h^{1,1}=h^{2,1}$ or 
do not have Borcea-Voisin mirror partners since they have parameters $(r,a,\delta)$
where $(20-r,a,\delta)$ does not 
belong to the list of $75$ possible triples found by Nikulin \cite{ni79}. 
Again, the most prominent example of this type is the model 
$(0-1) \in BV(18,4,0)$ discussed by Vafa and Witten in \cite{vawi95}. 
For each of these models, it seems that discrete torsion allows the 
construction of a mirror partner. Using our results, one even has
free fermion constructions for examples of CFTs associated to these 
``exceptional'' Borcea-Voisin threefolds.
\subsection{The Schoen threefold and its descendants}\label{schoen}
We remark that
our orbifold $(0-2)$, with Hodge numbers $(19,19)$, can be identified with
Schoen's threefold \cite{sch88}.  This may be of importance for the study
of semi-realistic heterotic string theories, as we shall explain below.
Let us first argue why $(0-2)$ does indeed agree with Schoen's threefold
\cite{sch88} which
is obtained as the fiber product over
$\P^1$ of two rational elliptic surfaces $S_1,\,S_2$.

To this end note first that  Schoen's threefold has Hodge numbers $(19,19)$
in agreement with our claim. Namely,
the complex structure of each rational elliptic surface depends on $8$
(complex) parameters, and three more parameters are needed to fix an
isomorphism between the two $\P^1$ bases, resulting in $8+8+3=19$
parameters in all. 
We claim that our orbifolds $(0-2)$ form a $3$ dimensional subfamily of the family of
Schoen threefolds. The rational elliptic surface, which generically has $12$
degenerate fibers of type $I_1$, specializes here to an isotrivial one, having two
degenerate fibers of type $I_0^*$
and all other fibers having a fixed value of the $j$-invariant.
These surfaces depend on a single complex parameter, the fixed value of
$j$. Since these surfaces have automorphisms acting non trivially on the
$\P^1$ bases, we get only one additional parameter for matching the
bases, for a total of $1+1+1=3$ parameters, accounting for the moduli of
our three elliptic curves $E_i$.

To finally identify our threefolds of type $(0-2)$ with Schoen's threefold,
note that our orbifolds can be written in the form:
$$ 
Y = S_1 \times_{\P^1} S_2, 
$$
where in the obvious notation:
\begin{eqnarray*}
S_1 &:=& (E_1 \times E_3) / \langle (0+,0-),(0-,1-)\rangle,\\
S_2 &:=& (E_2 \times E_3) / \langle (0-,0-),(0+,1-)\rangle,\\
\P^1 &:=& E_3 / \langle (0-),(1-)\rangle = (E_3 /\langle (1+)\rangle )/\langle(0-)\rangle. 
\end{eqnarray*}
Each $S_i$ maps to this $\P^1$, with constant fiber $E_i$ except over two
points of $\P^1$ where the fiber degenerates.

The various quotients of our orbifold $(0-2)$ can be similarly identified
with quotients of special cases of the Schoen threefolds. Of greatest
immediate interest is orbifold $(1-3)$. This was studied in \cite{dopw00} in an
attempt to construct heterotic string compactifications with the low
energy spectrum of the Standard Model of particle physics. This attempt
succeeded through the construction of a different heterotic vector bundle
on the same threefold, in \cite{bodo06}, some of whose physical properties were
further investigated in \cite{bcd06}. Note that our identification of $(1-3)$
with the threefold used in these works implies that free fermion constructions
may suffice to construct the associated string theories. This would dramatically
simplify the rather technical approach of \cite{dopw00,bodo06}.

All free group actions on Schoen threefolds were analyzed in \cite{bodo07}, where
they are tabulated in Table 11. The last two, with fundamental group
$\Z_2$, correspond to our models $(1-3)$ and $(1-7)$. In \cite{bodo07} they are
distinguished by the invariants $m=2$ and $m=1$, respectively. The two
quotients with fundamental group $(\Z_2)^2$ correspond to our
models $(2-5)$ and $(2-14)$, corresponding again to $m=2$ and $m=1$, respectively.

Let us argue that the invariant $m$ in Table 11 of \cite{bodo07} 
can indeed be used to distinguish our families. Let $Y$ be a Schoen quotient, and 
$\pi\colon \widetilde{Y} \to Y$ its universal cover, of degree $n$.
The Schoen quotient $Y$ has a fibration $f\colon Y \to \P^1$.
The composition $\widetilde f := f \circ\pi$ is the original abelian surface fibration of the Schoen threefold $\widetilde{Y}$.  
The generic fiber $A = E_1 \times E_2$ of $\widetilde f$ 
is the product of two elliptic curves $E_1,E_2$, and the generic fiber of $f$ is its quotient by a finite subgroup. The invariant $m$ is defined so that the size of this subgroup is $n/m$: the covering map $\pi$ has degree $n/m$ along the fibers of $f$ and degree $m$ along the base $\P^1$. So 
$\pi^{-1}$ of a generic abelian surface fiber splits into $m$ disconnected
components, each an abelian surface. In other words, $m$ can be recovered from the topology of $Y$ plus the fibration $f$.
So if we know that the fibration $f$ is unique, it follows 
that $m$ can be used to distinguish threefolds.

To recover $f$ for the generic member $Y$ in each of our families, we 
assume that $\widetilde Y$ is the fiber product of two 
rational elliptic surfaces $S_1,S_2$, and that there exists a point of $\P^1$ such that
the two elliptic fibers $E_1,E_2$ over it are not isogenous. This can be arranged
since by moving in the moduli space of $Y$ we can vary the $j$-function continuously.
Then the generic fiber $A = E_1 \times E_2$ of 
$\widetilde f := f \circ\pi$ 
is the product of two non isogenous elliptic curves. The only line
bundles on such an $A$ are products of pullbacks from the two components. Any map $A \to \P^1$ is given by such a line bundle of self-intersection $0$, hence the line bundle must be a pullback from a single $E_i$, and the map must factor through that $E_i$. Therefore any map ${\widetilde f}^\prime : \widetilde Y \to \P^1$ must factor through an elliptic fibration on one of the rational elliptic surfaces $S_i$. But the elliptic fibration on $S_i$ is unique, and is given by $E_i$ in the anticanonical system: the connected component $C$ of the general fiber of any other fibration on $S_i$ has positive intersection number with $E_i$, so by adjunction it has to be rational rather than elliptic.
This proves that the fibration $f$ is unique. It follows that Schoen quotients with distinct invariants $m$ are non isomorphic as algebraic varieties. Since each family of Schoen quotients dominates its complex structure moduli space, it also follows that Schoen quotients with distinct invariants $m$ are not deformation equivalent.
%
%
%
\renewcommand{\thesection}{\Alph{section}}
\renewcommand{\theequation}{\Alph{section}.\arabic{equation}}
\setcounter{section}{0}
\section{Jacobi theta functions and their properties}\label{thetafctns}
We use the following  functions of $q=e^{2\pi i\tau}$, 
$\tau\in\H$, $\H=\left\{\tau\in\C\mid \Im(\tau)>0\right\}$ and $y=e^{2\pi i z},\,z\in\C$,
\beqno
\theta_1(\tau,z)  =   -\theta_{11}(\tau,z)
& := & 
i\sum_{n=-\infty}^\infty (-1)^n q^{{1\over2}(n-{1\over2})^2} y^{n-{1\over2}}\e 
& =&
i q^{{1\over8}} y^{-{1\over2}} \prod_{n=1}^\infty 
(1-q^n)(1-q^{n-1}y)(1-q^{n}y^{-1}),\\[10pt]\ds
\theta_2(\tau,z)  =\;\;\,  \theta_{10}(\tau,z) 
& := &
\sum_{n=-\infty}^\infty q^{{1\over2}(n-{1\over2})^2} y^{n-{1\over2}}\e
& =&
q^{{1\over8}} y^{-{1\over2}} \prod_{n=1}^\infty 
(1-q^n)(1+q^{n-1}y)(1+q^{n}y^{-1}),\\[10pt]\ds
\theta_3(\tau,z)  = \;\;\,  \theta_{00}(\tau,z) 
& := &
\sum_{n=-\infty}^\infty q^{{n^2\over2}} y^{n}\e 
& = &
\prod_{n=1}^\infty (1-q^n)(1+q^{n-{1\over2}}y)(1+q^{n-{1\over2}}y^{-1}),
\\[10pt]\ds
\theta_4(\tau,z)  = \;\;\,  \theta_{01}(\tau,z) 
& := &
\sum_{n=-\infty}^\infty (-1)^n q^{{n^2\over2}} y^{n}\e 
& =&
\prod_{n=1}^\infty (1-q^n)(1-q^{n-{1\over2}}y)(1-q^{n-{1\over2}}y^{-1}).  
\eeqno
The functions $\theta_1,\,\theta_2,\,\theta_3,\,\theta_4$ are commonly known as
Jacobi theta functions.
We frequently denote $\theta_k(\tau):=\theta_k(\tau,0)$ or even 
$\theta_k:=\theta_k(\tau,0)$, so in particular since $\theta_1(\tau,z)$ is an odd
function in $z$, 
$\theta_1=0$.

The following transformation laws are obtained directly
from the definition or by Poisson resummation:
\beqno\label{anh_th_trafo}
\begin{array}{|l||c|c|c|c|}
\hline\vphantom{\displaystyle\int}
\mbox{Operation} 
& \theta_1(\tau)  & \theta_2(\tau) & \theta_3(\tau) & \theta_4(\tau) \\
\hline\hline\hline
\vphantom{\displaystyle\int}
\tau \mapsto \tau + 1 
& e^{ {2\pi i\over 8} } \theta_1(\tau,z)& e^{ {2\pi i\over 8} } \theta_2(\tau,z)
& \theta_4(\tau,z)&  \theta_3(\tau,z) \\
\hline\vphantom{\displaystyle\int}
\tau \mapsto -{1\over\tau} , 
& (-i)(-i\tau)^{ {1\over 2} }e^{ {\pi i z^2\over \tau} } \cdot
& (-i\tau)^{ {1\over 2} }e^{ {\pi i z^2\over \tau} } \cdot
& (-i\tau)^{ {1\over 2} }e^{ {\pi i z^2\over \tau} }\cdot
& (-i\tau)^{ {1\over 2} }e^{ {\pi i z^2\over \tau} } \cdot \\
\quad z \mapsto {z\over\tau}
& \quad\quad \cdot\theta_1(\tau,z)
& \quad\quad \cdot\theta_4(\tau,z)
& \quad\quad \cdot\theta_3(\tau,z)
& \quad\quad \cdot\theta_2(\tau,z) \\[0.2em]
\hline
\end{array}\nonumber\\
\eeqno
We also use the Dedekind eta function
$$
\eta=\eta(\tau):=q^{1/24} \prod_{n=1}^\infty \left( 1-q^n  \right).
$$
Under modular transformations, it obeys
$$
\eta(\tau+1)=e^{2\pi i/24} \eta(\tau), \quad
\eta({\textstyle-{1\over\tau}})=(-i\tau)^{{1\over2}} \eta(\tau).
$$
By using the Jacobi triple identity one can prove the following product
formulas:
\beqn{vanish}
\theta_2(\tau)\theta_3(\tau)\theta_4(\tau) &=& 2 \eta(\tau)^3\nonumber\\
\theta_2(\tau)^4 - \theta_3(\tau)^4 + \theta_4(\tau)^4 &=& 0 . 
\eeqn
\section{Representations of the free fermion algebra}\label{ffr}
A single free fermion $\psi$ can have one of four different spin structures,
each characterized by two binaries
$\alpha,\,\beta\in\{0,1\}$.
The fermion $\psi$
is said to belong to the NS (Neveu-Schwarz) sector if $\alpha=0$, 
where it has half integer (Fourier) modes on expansion with respect
to the parameter $x\in\C^\ast$ of the field $\psi$, and otherwise
it belongs to the R (Ramond)
sector, where it has integer modes. The modes obey
$$
\left\{\psi_a,\psi_b\right\} = \delta_{a+b,0} \;\mbox{ for }\;
a,b\in\left\{ \begin{array}{ll}
\Z+\inv{2}&\mb{(NS)}\\
\Z&\mb{(R)}
\end{array}\right.
$$
and thus act as creation or annihilation operators. 
These modes together with $1$ (that is, a central element which in each 
representation is normalized to act as identity operator)
form a vector space basis of the so-called free fermion algebra.

Let $\m H_0,\,\m H_1$
denote the irreducible Fock space representations of the free fermion
algebra in the NS and the R sector, respectively, enlarged by
$(-1)^F$ with $F$ the worldsheet fermion number, i.e.\
such that $(-1)^F$ is a non-trivial involution which anticommutes
with all $\psi_a$.
Each state in $\m H_0,\,\m H_1$
is obtained by acting with pairwise distinct fermionic creation operators
on a ground state and thereby increasing the conformal dimension 
by half integer (NS) or integer (R) steps. In the NS sector, 
ground states of this Fock space representation of the free 
fermion algebra have conformal dimension $h=0$, whereas in the R sector, 
they have conformal dimension $h={1\over16}$. 
In fact, $\m H_0$ has a unique ground state (up to scalar multiples)
$|0\rangle$, the vacuum, 
whereas $\m H_1$ possesses a two 
dimensional space of such ground states. 
The vacuum $|0\rangle$ is a worldsheet boson, i.e. 
$(-1)^F|0\rangle=|0\rangle$,
and in $\m H_1$ we choose a basis $|\pm\rangle$ of ground states
such that $|+\rangle$ is a worldsheet boson and $|-\rangle$ is
a worldsheet fermion, i.e. $(-1)^F|\pm\rangle=\pm|\pm\rangle$. 
The decomposition of $\m H_0,\,\m H_1$
into worldsheet bosons and worldsheet fermions,
$$
\m H_{\alpha}\cong\m H_{\alpha}^+\oplus\m H_{\alpha}^-,\quad
\alpha\in\{0,1\}
$$ 
agrees with the decomposition into irreducible representations of 
the Virasoro algebra at central charge $c=\inv2$ which arises
from the universal enveloping algebra of the free fermion algebra
in either sector.
We set
\beq{theta}
\mbox{for } \alpha,\beta\in\{0,1\}:\quad
Z\!\!\left[\!\!\begin{array}{c}\alpha\\\beta\end{array}\!\!\right]
:=\sqrt{\vartheta_{\alpha,\beta}\over\eta},
\eeq
where the $\vartheta_{\alpha,\beta}$ denote the Jacobi theta functions and $\eta$
the Dedekind eta function listed in 
Appendix \ref{thetafctns}. The square root makes sense in terms of the infinite
product representations of the $\vartheta_{\alpha,\beta}$ also given there.
Then with $q=e^{2\pi i\tau}$ and $\tau$ as before, the above
discussion together with the explicit product formulas given in 
Appendix \ref{thetafctns} shows
$$
\begin{array}{rclrcl}
Z\!\!\left[\!\!\begin{array}{c}0\\0\end{array}\!\!\right]\!\!(\tau)\!\!
&=&\tr[\m H_0] \left[q^{L_0-1/48}\right],\quad
&Z\!\!\left[\!\!\begin{array}{c}0\\1\end{array}\!\!\right]\!\!(\tau)\!\!
&=&\tr[\m H_0] \left[(-1)^{F}\, q^{L_0-1/48}\right],\\[1em]
Z\!\!\left[\!\!\begin{array}{c}1\\0\end{array}\!\!\right]\!\!(\tau)\!\!
&=&{\textstyle{1\over\sqrt2}} \tr[\m H_1] \left[q^{L_0-1/48}\right],\quad
&Z\!\!\left[\!\!\begin{array}{c}1\\1\end{array}\!\!\right]\!\!(\tau)\!\!
&=&{\textstyle{1\over\sqrt2}} 
\tr[\m H_1] \left[(-1)^{F}\, q^{L_0-1/48}\right] = 0.
\end{array}
$$
The insertion of $(-1)^{F}$ in the traces to obtain
$Z\!\!\left[\!\!\begin{array}{c}\alpha\\1\end{array}\!\!\right]$ 
from
$Z\!\!\left[\!\!\begin{array}{c}\alpha\\0\end{array}\!\!\right]$ 
corresponds in Hamiltonian
language to changing the spin structure of the fermion in the imaginary
time direction. This means that 
$Z\!\!\left[\!\!\begin{array}{c}\alpha\\\beta\end{array}\!\!\right]$ 
gives the 
contribution to the partition function of a free fermion with spin 
structure specified by 
$\alpha,\,\beta\in\{0,1\}$.

The factors of ${1\over\sqrt2}$ in the traces for the R-sector 
yield $2\cdot{1\over\sqrt2}=\sqrt2$ as coefficient of the leading term $q^{{1\over24}}$
in $Z\!\!\left[\!\!\begin{array}{c}1\\0\end{array}\!\!\right]$,
accounting for the contributions from the space generated by $|+\rangle$ and $|-\rangle$. 
We obtain integer coefficients as soon as
we consider pairs of fermions with coupled spin structures in space
direction, which is necessary anyway in order to get pairwise local fields of a
well-defined CFT. Given a collection of free fermions,
for a tensor product between the
R-sectors of the  
$j^{\mbox{\tiny{}th}}$ and the $j^\prime{}^{\,\mbox{\tiny th}}$ free 
fermion, 
with coupled spin structures, 
$\m H_1^j\otimes\m H_1^{j^\prime}$ 
splits into two isomorphic representations of 
the free fermion algebras 
generated by $\psi_a^j,\,\psi_a^{j^\prime}$ with $a\in\Z$ enlarged by the 
total worldsheet fermion number operator $(-1)^{F_j+F_{j^\prime}}$,
one with ground states
$$
|+\rangle\otimes |+\rangle + |-\rangle\otimes |-\rangle, \quad
|+\rangle\otimes|-\rangle + |-\rangle\otimes |+\rangle,
$$
the other with ground states 
$$
|+\rangle\otimes |+\rangle - |-\rangle\otimes |-\rangle, \quad
|+\rangle\otimes|-\rangle - |-\rangle\otimes |+\rangle,
$$
respectively.
Let pr denote the projection onto one of these two representations. 
Using $\left(Z\!\!\left[\!\!\begin{array}{c}1\\0\end{array}\!\!\right]\right)^2$
as above then gives the trace over pr$\left(\m H_1^j\otimes\m H_1^{j^\prime}\right)$, as
the coefficients ${\sqrt2}$ conspire correctly to count a
two-dimensional space of ground states.
Since $\left(Z\!\!\left[\!\!\begin{array}{c}1\\1\end{array}\!\!\right]\right)^2=0$,
one of the generators is correctly counted as boson, the other as fermion. 
Summarizing, if $\psi^j$ and $\psi^{j^\prime}$ have coupled spin structures
$(\alpha_j,\beta_j)=(\alpha_{j^\prime}, \beta_{j^\prime})$ and pr is extended
trivially to $\m H_0^j\otimes\m H_0^{j^\prime}$, then
\beq{traceformula}
Z\!\!\left[\!\!\begin{array}{c}\alpha_j\\\beta_j\end{array}\!\!\right]
\cdot Z\!\!\left[\!\!\begin{array}{c}\alpha_{j^\prime}\\\beta_{j^\prime}\end{array}\!\!\right]
= \tr[{\rm pr}\left(\m H_{\alpha_j}^j\otimes\m H_{\alpha_{j^\prime}}^{j^\prime}\right)]
\left[ (-1)^{\beta_j F_j+\beta_{j^\prime} F_{j^\prime}} q^{L_0-1/48} \right].
\eeq
%
%
%
\def\polhk#1{\setbox0=\hbox{#1}{\ooalign{\hidewidth
  \lower1.5ex\hbox{`}\hidewidth\crcr\unhbox0}}}
\providecommand{\bysame}{\leavevmode\hbox to3em{\hrulefill}\thinspace}

\end{document}